\documentclass[11pt]{amsart}  
\usepackage{times}
\usepackage{hyperref}
\usepackage{a4wide}

\title{Bounds for binary codes relative to pseudo-distances of $k$ points}

\author{Christine Bachoc, Gilles Z\'emor}

\date{\today}
\newtheorem{defi}{Definition}[section]
\newtheorem{definition}[defi]{Definition}
\newtheorem{proposition}[defi]{Proposition}
\newtheorem{theorem}[defi]{Theorem}
\newtheorem{remark}[defi]{Remark}

\newtheorem{lemma}[defi]{Lemma}

\newcommand{\F}{{\mathbb{F}}} 

\newcommand{\R}{{\mathbb{R}}}

\newcommand{\W}{{\overline{\mathcal W}}}

\newcommand{\rr}{\overline{r}}

\newcommand{\Aut}{\operatorname{Aut}}

\newcommand{\card}{\operatorname{card}}

\newcommand{\ux}{\underline{x}}

\newcommand{\daff}{d^{\operatorname{aff}}}
\newcommand{\dkaff}{d_{k-1}^{\operatorname{aff}}}

\begin{document}

\begin{abstract}
We apply Schrijver's semidefinite programming method to obtain
improved upper bounds on generalized distances and list decoding radii
of binary codes.
\end{abstract}

\maketitle

\section{Introduction}

Let $H_n:=\F_2^n$ denote the binary Hamming space, endowed with the
Hamming distance. One of the longstanding problems of coding theory
is to find estimates for the maximum cardinality $A(n,d)$ of a
code $C\subset H_n$  with the constraint that the Hamming
distance of any pair of distinct elements of $C$ is at least equal to
$d$.
The best known upper bound for $A(n,d)$ is obtained with the
so-called linear programming method, due to Philippe Delsarte,
and is the optimal value of a linear program (LP for short) 
(\cite{Del1}, \cite[Chapter 9]{CS}). 
Because linear programs come with efficient algorithms, this method 
yields good numerical bounds for given parameters
$(n,d)$. Moreover, close to optimal explicit  feasible solutions have
been found from which upper bounds in the form of explicit functions
of $n$ and $d$ have been derived \cite{L}, as well as an upper bound in the
asymptotic range \cite{MRRW}. After these significant achievements, the
subject fell into a  period of about twenty years during which nothing
really new was discovered, until A. Schrijver in \cite{S} obtained 
improved upper bounds on $A(n,d)$ for some small values of the 
parameters $(n,d)$, using semidefinite programming.
Although these improvements are numerically not all
that impressive, the method behind them introduces genuinely new ideas. 
In order to explain them, it is good to go back to Delsarte's method. 
Let us recall that the variables of the Delsarte linear
program represent the distribution of the Hamming distance in the
constrained code. More precisely, let
\begin{equation*}
x_i:=\frac{1}{\card(C)}\card\{(x,y)\in C^2 : d(x,y)=i\}.
\end{equation*}
Then the main idea is to observe that these variables satisfy certain
linear inequalities, the non trivial ones being related to the
Krawtchouck polynomials $K^n_k(x)$, namely, for $0\leq k\leq n$,
\begin{equation*}
\sum_{i=0}^nK_k^n(i) x_i \geq 0.
\end{equation*}

Schrijver's new idea \cite{S} is to exploit constraints on 
{\em triples} of points
$(x,y,z)\in C^3$ rather than deal only with pairs. It turns out that
the natural constraints are semidefinite positive (SDP) instead of
linear. The variables of the program are 
\begin{equation*}
x_{a,b,c}:=\frac{1}{\card(C)} \card\{(x,y,z)\in C^3:
d(y,z)=a, d(x,z)=b, d(x,y)=c\}
\end{equation*}
and the  SDP constraints take the form
\begin{equation*}
\sum_{a,b,c} x_{a,b,c}S(a,b,c)\succeq 0,
\end{equation*}
where $\succeq 0$ stands for ``is positive semidefinite'',
for some symmetric matrices $S(a,b,c)$. These SDP constraints are
closely related to the action of the symmetric
group $S_n$ on the functional space $\R^{H_n}$; more precisely, each
$S_n$-irreducible module occuring in $\R^{H_n}$ gives rise to an SDP
inequality with  matrices of size equal to its multiplicity. It should be noted
that the full group of automorphisms
$\Aut(H_n)$ acts multiplicity free on the same space $\R^{H_n}$, and
that it is the true reason why in the case of the Delsarte method, the
constraints are linear. 

The aim of the present paper is to show  that the Schrijver method can be
used not only to strengthen the LP bounds, but also to give
bounds for other problems, to which the LP method does not apply. 
Indeed, in recent years several generalizations of the Hamming
distance, in the form of functions (we will call them
pseudo-distances) 
of $k\geq 3$ elements of $H_n$ have attracted attention. We consider here three
such functions $f(x_1,\dots,x_k)$, namely the generalized Hamming
distance $d(x_1,\dots,x_k)$, the radial distance $r(x_1,\dots,x_k)$ and the
average radial distance $\rr(x_1,\dots,x_k)$. They share the crucial property
of being invariant by the action of the automorphism group $\Aut(H_n)$ of
the Hamming space. 

The generalized Hamming {\em weights} of linear codes were introduced by 
Ozarow and Wyner \cite{OW} in view of cryptographic
applications related to the so-called wire-tap channel. 
The concept was later made popular for its own sake by Wei \cite{W}.
The notion was extended to the non linear setting in~\cite{CLZ} in
order to derive bounds on generalized weights. The generalized
Hamming {\em distance} $d(x_1,\dots,x_k)$ of $k$ points is the number of
coordinates where the $k$ points are not all equal. Thus $d(x_1,x_2)$
is the classical Hamming distance. In \cite{CLZ}, the authors derive
bounds for generalized distances, focusing on asymptotics,
which are analogs of the classical Hamming, Plotkin and
Elias-Bassalygo bounds. In the case of linear codes the best known 
asymptotic upper bounds were obtained in~\cite{ABL}.

The radial distance and the average radial distance are related to 
the notion of list decoding. 
The {\em radial distance} or {\em radius} of $k$ elements is the
smallest radius of a Hamming ball that contains the $k$ points. If a code $C$
has the property that the radius of any $k$-tuple of pairwise distinct
points is at least equal to some value $r$, then any ball of the
Hamming space of radius $r-1$ intersects $C$ in at most $k-1$
points. Thus a decoding procedure that outputs every codeword
at distance at most $r-1$ of any given received vector yields a {\em list}
of codewords of cardinality at most $k-1$. 
The search for large codes with given minimum $k$-radius is
also studied in the litterature as the quest for dense {\em multiple
  packings}: indeed, a code of minimal $k$-radius $r$
provides a packing of balls (centered at the codewords, of
radius $r-1$) such that any element of $H_n$ belongs to at most $k-1$ balls.
These notions have a long standing history, going back to
problems in Euclidean geometry and to early coding theory. 
They came back into the limelight some ten years ago when Sudan 
discovered his now famous
algorithm for list decoding of Reed-Solomon codes \cite{Su}.
Blinovskii~\cite{Bli1} establishes asymptotic bounds on the maximal number of
elements of a code with given minimal radius: in the process he
defines an auxiliary quantity, the {\em average radius} of $k$
elements that we will  also investigate.

In general, we are given a function $f$ from $H_n^k$ into the set of
non-negative integers, and we denote by
$A_{k-1}(n,f,m)$ the maximal number of
elements that a binary code $C$ can have under the constraint that
$f(x_1,\dots,x_k)\geq m$ for every $k$-tuple of pairwise distinct
codewords. Our goal is to show that the SDP method gives good upper
bounds for $A_2(n,f,m)$ for modest values of $n$, when compared with the
classical bounds. Our results provide strong motivation for the
development of the SDP method, which is far from being at the same
stage of achievement as the LP method. 

The paper is organized as follows: Section 2 provides a description
of the orbits of $\Aut(H_n)$ acting on $H_n^k$. This preliminary
task is essential since the pseudo-distances we are dealing with only
depend on these orbits. Section 3 recalls the definitions and basic
properties of the three particular functions we consider. Section 4 defines the
code invariants associated to these functions and recalls their
significance for applications. Section 5 settles the ``classical''
bounds. These bounds already appear in the litterature (\cite{CLZ},
\cite{Bli1}, \cite{Bli2}) but not in the
precise form needed here: either they are settled only for linear
codes, or the concern is in the asymptotic setting
and they are not as tight as they can be for small parameters.
Section 6 recalls the SDP method of \cite{S} using the language of
group representation, i.e. following \cite{Ba}, \cite{V1}, \cite{V2}. 
Section 7 provides some numerical results.

\section{The orbits of $Aut(H_n)$ acting on $H_n^{k}$}\label{orbits}

The automorphism
group of the binary Hamming space $H_n:=\F_2^n$, denoted by $\Aut(H_n)$,  is the  semi-direct
product of the group of translations by elements of $H_n$ with the
group of permutations on the $n$ coordinates.
The group $\Aut(H_n)$ acts two-point homogeneously on $H_n$, which 
means that the orbits of $\Aut(H_n)$ acting on $H_n^2$ are
characterized by the Hamming distance. In other words 
\begin{equation*}
(x,y)\sim_{\Aut(H_n)} (x',y') \Leftrightarrow d(x,y)=d(x',y').
\end{equation*}
Here $(x,y)\sim_{\Aut(H_n)} (x',y')$ stands for: there exists $g\in
\Aut(H_n)$
such that $g(x)=x'$ and $g(y)=y'$. We want to study the action of
$\Aut(H_n)$ on $k$-tuples $(x_1,\dots,x_k)\in H_n^{k}$. 
We introduce:

\begin{definition} For $\underline{x}=(x_1,\dots,x_{k})\in H_n^{k}$, and for $u\in
  \F_2^{k}$, let 
\begin{equation*}
n_u(\underline{x}):=\card\{\ j,\  1\leq j\leq n : ((x_1)_j,\dots,(x_k)_j)=u\}
\end{equation*}
and let the ``weight distribution'' of $\underline{x}$ be defined by
\begin{equation*}
\mathcal{W}(\underline{x}):=(n_u(\underline{x}))_{u\in \F_2^{k}}.
\end{equation*}
For $u\in \F_2^{k}$, the word obtained from $u$ by flipping zeros and
ones, will be denoted by $\overline{u}$. In other words
$\overline{u}=u+1^k$. One of $\{u,\overline{u}\}$ has the form $0w$
with $w\in \F_2^{k-1}$. Let 
\begin{equation*}
n_w(\underline{x}):=n_{0w}(\underline{x})+n_{1\overline{w}}(\underline{x}).
\end{equation*}
The ``symmetrized weight distribution'' of $\underline{x}$ is defined by:
\begin{equation*}
\overline{\mathcal{W}}(\underline{x}):=(n_w(\underline{x}))_{w\in \F_2^{k-1}}
\end{equation*}

\end{definition}

\noindent{\bf Remarks:}
\begin{enumerate}
\item It is nice to identify $\underline{x}$ with the $(k,n)$
  matrix  $M(\ux)$ whose $i$-th line equals $x_i$. Then $n_u(\underline{x})$ is the number of columns of $\underline{x}$ which are
equal to $u$:
\begin{equation*}
M(\ux)=
\begin{array}{lll}
x_1 &= 000\dots 0& \dots\dots\\
x_2 &= 111\dots 1& \dots\dots\\
\ \vdots &=\quad\vdots\qquad\vdots&\\
x_k &= \underbrace{111\dots 1}_{n_u(\underline{x})}& \dots\dots
\end{array}
\end{equation*}
\item We have $\sum_{u\in \F_2^{k}} n_u(\ux)=\sum_{w\in \F_2^{k-1}}
  n_w(\underline{x})=n$. 
\end{enumerate}

\begin{proposition}
\begin{equation*}
\underline{x}\sim_{\Aut(H_n)} \underline{y} \Leftrightarrow \W(\underline{x})=\W(\underline{y}).
\end{equation*}
\end{proposition}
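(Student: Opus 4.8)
The plan is to work with the matrix picture $M(\ux)$ from the Remarks and to track how the two kinds of generators of $\Aut(H_n)$ act on its columns. I fix the convention that $g=(\sigma,t)\in\Aut(H_n)$ sends $\ux$ to $(\sigma(x_1)+t,\dots,\sigma(x_k)+t)$, where $\sigma$ permutes the $n$ coordinates and $t\in\F_2^n$ is a single translation applied to every row. The point to establish at the outset is that these generators act transparently on the multiset of columns of $M(\ux)$: the permutation $\sigma$ merely reorders the columns, whereas translation by $t$ replaces the column sitting in coordinate $j$ by its complement $\ov{u}=u+1^k$ precisely when $t_j=1$, and leaves it fixed when $t_j=0$. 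Thus a global translation acts coordinatewise and can move a column only within its antipodal pair $\{u,\ov{u}\}$.

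For the implication $\Rightarrow$ I would check that $\W$ is $\Aut(H_n)$-invariant on each generator. A permutation only reorders columns, so every $n_u$, and hence every $n_w$, is preserved. A translation need not preserve an individual $n_u$, since it may flip a column from $u$ to $\ov{u}$; but $\{u,\ov{u}\}$ is exactly the pair amalgamated in $n_w=n_{0w}+n_{1\ov{w}}$, so the sum $n_u+n_{\ov{u}}$ is untouched and every $n_w$ survives. As both generators fix $\W$, so does the whole group.

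For the converse I would pass to a canonical form by translation. Choosing $t_j=(x_1)_j$ makes the first row of $M(\ux)$ identically zero, so after this translation every column has the shape $0w$ with $w\in\F_2^{k-1}$; moreover the number of columns equal to $0w$ becomes exactly $n_w(\ux)$, because the translation folds each antipodal pair $\{0w,1\ov{w}\}$ onto its representative $0w$. Performing the same normalization on $\underline{y}$ and invoking $\W(\ux)=\W(\underline{y})$, the two normalized tuples have identical multisets of columns, so some coordinate permutation $\pi$ carries one onto the other. Composing the normalizing translation for $\ux$, the permutation $\pi$, and the inverse of the normalizing translation for $\underline{y}$ yields an element of $\Aut(H_n)$ sending $\ux$ to $\underline{y}$.

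The bookkeeping identity $n_u+n_{\ov{u}}=n_w$ and the verification that the composite map lies in $\Aut(H_n)$ are routine. The one place that needs genuine care is the description of the translation action on columns established in the first paragraph---namely that a single global translation acts coordinatewise and flips each column only within its antipodal pair. Once this is pinned down, both directions reduce to elementary statements about multisets of columns considered up to permutation and antipodal flips.
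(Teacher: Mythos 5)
Your proof is correct and takes essentially the same route as the paper: the paper also normalizes by translating by $x_1$, reducing $\Aut(H_n)$-equivalence of $\ux,\underline{y}$ to $S_n$-equivalence of $(0,x_2-x_1,\dots,x_k-x_1)$ and $(0,y_2-y_1,\dots,y_k-y_1)$, and then compares the multisets of columns by sorting them into lexicographic order. Your generator-by-generator check that translations fold each antipodal pair $\{u,\ov{u}\}$ and hence preserve $\W$ is just a more explicit rendering of steps the paper leaves implicit.
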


\begin{proof}
It is clear that $\underline{x}\sim_{\Aut(H_n)} \underline{y}$ iff $\underline{x}'\sim_{S_n} \underline{y}'$ where 
$\underline{x}'=(0,x_2-x_1,\dots, x_{k}-x_1)$ and $\underline{y}'=(0,y_2-y_1,\dots,
y_{k}-y_1)$. Then $\W(\underline{x}')=\W(\underline{y}')$ iff $\mathcal{W}(\underline{x}')=\mathcal{W}(\underline{y}')$ and is left unchanged
if the coordinates are permuted. Conversely, for an appropriate
permutation $\sigma$ of the coordinates, $\sigma(\underline{x}')$ has its columns
reordered in lexicographic order. Another permutation $\tau$ has the
same effect on $\underline{y}'$; since $\mathcal{W}(\sigma(\underline{x}'))=\mathcal{W}(\tau(\underline{y}'))$, it means that
$\sigma(\underline{x}')=\tau(\underline{y}')$.
\end{proof}

\begin{remark}

\noindent

\begin{enumerate}
\item If $k=2$, we have of course $n_1(\ux)=d(x_1,x_2)$ and
  $n_0(\ux)=n-n_1(\ux)$.
In the case $k=3$, we have
\begin{align*}
n_{10}+n_{11}&=d(x_1,x_2)\\
n_{01}+n_{10}&=d(x_2,x_3)\\
n_{01}+n_{11}&=d(x_3,x_1)
\end{align*}
and the triple $(d(x_1,x_2),d(x_2,x_3),d(x_3,x_1))$
uniquely determines the orbit of $(x_1,x_2,x_3)$.
\item For arbitrary $k$,
taking into account the relation $\sum_w n_w=n$,  the orbits
  of $\Aut(H_n)$ on $H_n^{k}$ are described  by $2^{k-1}-1$
  independent parameters. In contrast, the orbits of $k$-tuples of
  elements of the unit sphere of the Euclidean space $S^{n-1}$ under
  the action of the orthogonal group $O(\R^n)$ need only
  $\binom{k}{2}$ real numbers in order to be uniquely determined,
  namely the pairwise inner products of the $k$ vectors. The orbits of $H_n^k$ under $\Aut(H_n)$ are determined by
the pairwise distances $d(x_i,x_j)$ only if  $k=2,3$.

\item In the next section we introduce several functions 
$f(x_1,\dots,x_k)$ such that
$$f(\sigma(x_1),\dots,\sigma(x_k))=f(x_1,\dots,x_k)$$ for all $\sigma\in
\Aut(H_n)$. It follows from the above description of the orbits of
$H_n^k$ that such  functions have an expression of the form $f(x_1,\dots,x_k)=\tilde
f(\W(\underline{x}))$. 
\end{enumerate}
\end{remark}

\section{$Aut(H_n)$-invariant functions on $H_n^{k}$.}\label{functions}

\subsection{The generalized Hamming distance.}

\begin{definition} The generalized
Hamming distance of $k$ elements of $H_n$ is defined by:
\begin{align*}
d(x_1,\dots,x_k)&=\card\{j,\  1\leq j\leq n:\  \card\{(x_1)_j,\dots,
(x_k)_j\}\geq 2\}\\
&=\card\{j,\  1\leq j\leq n:\  ((x_1)_j,\dots,(x_k)_j)\notin \{0^k,
1^k\}\}
\end{align*}
\end{definition}

\begin{proposition}\label{prop d}
The following properties hold for the generalized Hamming distance:
\begin{enumerate}
\item $d(x_1,x_2)$ is the usual Hamming distance.
\item For all permutation $\tau$ of $\{1,\dots,k\}$, $d(x_1,\dots,
  x_k)=d(x_{\tau(1)},\dots,x_{\tau(k)})$.
\item For all $\sigma\in \Aut(H_n)$,
  $d(x_1,\dots,x_k)=d(\sigma(x_1),\dots,\sigma(x_k))$.
The generalized distance $d(x_1,\dots,x_k)$ is related to
the weight distribution by: 
\begin{equation*}
d(x_1,\dots,x_k)=\sum_{w\neq 0^{k-1}}n_w(\underline{x}).
\end{equation*}
\item $d(x_1,\dots,x_{k-1},x_k)=d(x_1,\dots,x_{k-1})$ if $x_k$ belongs
  to the affine subspace generated by $x_1,\dots,x_{k-1}$.
\item ``Triangular'' inequality: for all $y\in H_n$, 
\begin{equation*} 
d(x_1,\dots,x_k)\leq \frac{1}{k-1}\sum_{i=1}^k d(x_1,\dots,
x_{i-1},y,x_{i+1},\dots, x_k).
\end{equation*}
\item The distance of three points can be expressed in terms of
  pairwise Hamming distances:
\begin{equation*}
d(x_1,x_2,x_3)=\frac{1}{2}(d(x_1,x_2)+d(x_2,x_3)+d(x_3,x_1)).
\end{equation*}
\item For more than three points we have only the inequality:
\begin{equation*}
d(x_1,x_2,\dots, x_k)\leq \frac{1}{k-1}\sum_{1\leq i<j\leq k}d(x_i,x_j).
\end{equation*}

\item We also have the inequalities:
\begin{align*}
d(x_1,\dots,x_k) &\leq 
\frac{1}{k-1}\sum_{i=1}^k d(x_1,\dots,x_{i-1},x_{i+1},\dots,x_k)\\
d(x_1,\dots,x_k) &\geq \frac{1}{k}\sum_{i=1}^k d(x_1,\dots,x_{i-1},x_{i+1},\dots,x_k).
\end{align*}

\end{enumerate}
\end{proposition}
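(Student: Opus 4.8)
The plan is to reduce every one of the eight assertions to the single observation contained in~(iii), namely that
\[
d(x_1,\dots,x_k)=\sum_{w\neq 0^{k-1}}n_w(\ux)=n-n_{0^{k-1}}(\ux),
\]
where $n_{0^{k-1}}(\ux)=n_{0^k}(\ux)+n_{1^k}(\ux)$ is exactly the number of coordinates $j$ at which the column $((x_1)_j,\dots,(x_k)_j)$ of $M(\ux)$ is constant. Thus $d(x_1,\dots,x_k)$ simply counts the \emph{non-constant columns} of $M(\ux)$, and almost every statement is then read off one coordinate at a time. First I would dispatch the easy items: (i) is the definition in the case $k=2$; (ii) and the invariance claim of (iii) are immediate because ``the column is non-constant'' is unchanged under permuting the rows of $M(\ux)$ (permuting the $x_i$), permuting the columns (coordinate permutations) and adding a fixed word to every row (translations); and the displayed formula in (iii) is the identity above, once one checks that $0^k$ and $1^k$ are precisely the two constant columns, both collapsing to the index $w=0^{k-1}$.

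For (iv) I would argue coordinate-wise, using that over $\F_2$ an affine combination $x_k=\sum_{i\in S}x_i$ of $x_1,\dots,x_{k-1}$ has $|S|$ odd. At a coordinate where $(x_1)_j=\dots=(x_{k-1})_j=b$, the affine relation forces $(x_k)_j=|S|\,b=b$, so a constant column stays constant; and at a coordinate where the first $k-1$ entries are not all equal, appending $(x_k)_j$ cannot make them equal. Hence the set of non-constant columns is unchanged and $d(x_1,\dots,x_k)=d(x_1,\dots,x_{k-1})$. Item (vi) is then a one-line computation from the $k=3$ relations recalled in Section~\ref{orbits}: summing them gives $d(x_1,x_2)+d(x_2,x_3)+d(x_3,x_1)=2(n_{01}+n_{10}+n_{11})=2\,d(x_1,x_2,x_3)$.

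The three inequalities (v), (vii), (viii) all follow by proving the corresponding inequality for the indicator of a single column and summing over the $n$ coordinates; this is where the only genuine case analysis lives. For (v), fix a coordinate with entries $a_1,\dots,a_k$ and put $c=y_j$; the left side contributes $1$ exactly when the $a_i$ are not all equal, so it suffices to show that in that case at most one of the $k$ columns obtained by replacing $a_i$ with $c$ is constant. But if replacing positions $i$ and $i'$ (with $i\neq i'$) both yielded constant columns, then every $a_l$ would equal $c$, contradicting non-constancy; hence at least $k-1$ of the replaced columns are non-constant, which is exactly the factor $1/(k-1)$. For (vii), if the column has $p$ ones and $q=k-p$ zeros, the number of differing pairs is $pq$, and $pq\ge k-1$ whenever $0<p<k$, matching the factor $1/(k-1)$.

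For (viii) the relevant quantity is the number $S$ of indices $i$ for which \emph{deleting} $a_i$ leaves a non-constant column (assuming, as throughout the paper, $k\ge 3$). On a constant column $S=0$; on a non-constant column, deletion produces a constant column only when the minority value is unique, so $S=k-1$ if $p\in\{1,k-1\}$ and $S=k$ when $2\le p\le k-2$. Since $k-1\le S\le k$ in every non-constant case, the upper bound (with factor $1/(k-1)$) and the lower bound (with factor $1/k$) both fall out immediately after summation. The main obstacle is nothing conceptual but simply the bookkeeping of these per-column case distinctions; once the reduction ``$d$ counts non-constant columns'' is established in (iii), no further machinery is required.
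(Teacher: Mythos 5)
Your proof is correct and takes essentially the same approach as the paper's: both rest on the coordinate-wise additivity $d(x_1,\dots,x_k)=\sum_{j=1}^n d((x_1)_j,\dots,(x_k)_j)$ followed by a per-column case analysis on the number of ones (your $S$ for (viii) is exactly the paper's three-case count $k$, $k-1$, $0$). The only harmless deviations are that you prove (vi) by summing the three relations of Section~\ref{orbits} instead of observing that $w(k-w)\in\{0,2\}$ when $k=3$, and that you spell out for (v) the argument (at most one of the $k$ replacements can yield a constant column) that the paper dismisses as ``easily checked in the case $n=1$''.
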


\begin{proof}
Properties (i), (ii), (iii) are obvious. 

If 
$x_k$ belongs  to the affine subspace generated by
$x_1,\dots,x_{k-1}$, then we can write $x_k=\sum_{i=1}^{k-1} \alpha_i x_i$ with
$\sum_{i=1}^{k-1} \alpha_i =1$. Consequentely,
if $((x_1)_j,\dots,(x_{k-1})_j)= 0^{k-1}$, respectively $1^{k-1}$, then 
we have $((x_1)_j,\dots,(x_{k})_j)= 0^{k}$, respectively $1^{k}$. It
follows (iv) that $d(x_1,\dots,x_{k-1},x_k)=d(x_1,\dots,x_{k-1})$.

The announced ``triangular'' inequality (v) is easily checked in the case
$n=1$.
The general case follows from the fact that
\begin{equation}\label{e1}
d(x_1,\dots,x_k)=\sum _{j=1}^n d((x_1)_j,\dots,(x_k)_j).
\end{equation}

Again because of \eqref{e1}, it is enough to prove (vi) (vii) and (viii) for
$n=1$.
In this case, let the Hamming weight of $(x_1,\dots,x_k)$ be denoted
by $w$, then 
\begin{equation*}
    d(x_1,\dots,x_k)=\left\{
       \begin{array}{ll}
            1 &\text{ if } 1\leq w\leq k-1\\
            0 &\text{ if } w=0,k.
       \end{array}
    \right.
\end{equation*}
and 
\begin{equation*}
   \sum_{1\leq i<j\leq k}d(x_i,x_j)= w(k-w).
\end{equation*}
Obviously $w(k-w)\geq k-1$ if $w\neq 0,k$ 
and equals $0$ otherwise. Inequality (vii) follows. In the case
$k=3$, $w(k-w)$ takes only the values $0$ and $2$ hence (vi).

To prove (viii), notice that we have
\begin{equation*}
    \sum_{i=1}^kd(x_1,\dots,x_{i-1},x_{i+1},\dots,x_k)=\left\{
       \begin{array}{ll}
            k &\text{ if } 2\leq w\leq k-2\\
            k-1 &\text{ if } w=1,k-1\\
            0 &\text{ if } w=0, k
       \end{array}
    \right.
\end{equation*}
hence the announced inequalities.
\end{proof}

\subsection{The radial distance.}

\begin{definition}
The radial distance or radius of $k$ elements of $H_n$ is defined by:
\begin{align*}
r(x_1,\dots,x_k)&=\min\{r : \text{ there exists }y\in H_n \text{ s.t. }
  \{x_1,\dots,x_k\}\subset B(y,r)\}\\
&=\min_y \{\max_{1\leq i\leq k}  d(y,x_i) \}.
\end{align*}
\end{definition}

\begin{proposition}\label{prop r} The radial distance has the properties:

\begin{enumerate}
\item $r(x_1,x_2)= \lceil \frac{d(x_1,x_2)}{2}\rceil$.
\item For all permutations $\tau$ of $\{1,\dots,k\}$, $r(x_1,\dots,
  x_k)=r(x_{\tau(1)},\dots,x_{\tau(k)})$.
\item For all $\sigma\in \Aut(H_n)$,
  $r(x_1,\dots,x_k)=r(\sigma(x_1),\dots,\sigma(x_k))$.
\item For all $k$,
\begin{equation*}
r(x_1,\dots,x_k)\geq \max_{1\leq i\leq k} r(x_1,\dots,x_{i-1}, x_{i+1},\dots, x_k).
\end{equation*}
\item For $k=3$, we have
\begin{equation*}
r(x_1,x_2,x_3)=\max\{r(x_1,x_2), r(x_2,x_3), r(x_3,x_1)\}.
\end{equation*}
\end{enumerate}
\end{proposition}

\begin{proof} Properties (i) (ii) (iii) and (iv) are obvious.

Let $(x_1,x_2,x_3)\in H_n^3$. Without loss of generality we can assume
that $d(x_2,x_3)\leq d(x_3,x_1)\leq d(x_1,x_2)$ and that $x_1=0$. With 
the notation of section 1 it amounts to assume 
that $n_{01}\leq n_{10}\leq n_{11}$. 
Let $y\in H_n$ be the center of a smallest ball containing the three
words; 
clearly the coordinates of $y$ at the positions corresponding to
$w={00}$ in $M(\ux)$ must be equal
to $0$. Let $y_w$ be the number of ones  at the positions corresponding
to $w$. We have:
\begin{align*}
d(y,x_1)&= y_{01}+y_{10}+y_{11}\\
d(y,x_2)&= y_{01}+n_{10}-y_{10}+n_{11}-y_{11}\\
d(y,x_3)&= n_{01}-y_{01}+y_{10}+n_{11}-y_{11}\\
\end{align*}
We choose $y$ such that:
\begin{align*}
y_{01}&=0\\
y_{11}&=\lfloor \frac{n_{01}+n_{10}+2n_{11}}{4}\rceil\leq n_{11}\\
y_{10}&=\lfloor \frac{n_{10}-n_{01}}{4}\rceil\leq n_{10}
\end{align*}
Then one easily verifies that for $i=1,2,3$, $d(y,x_i)\leq \lceil\frac{n_{10}+n_{11}}{2}\rceil$
thus the ball centered at $y$ with radius
$\lceil\frac{n_{10}+n_{11}}{2}\rceil$ contains the three words
$x_1,x_2,x_3$. Since
$$n_{10}+n_{11}=d(x_1,x_2)=\max(d(x_2,x_3),d(x_3,x_1), d(x_1,x_2))$$
we have proved that 
\begin{equation*}
r(x_1,x_2,x_3)\leq \max\{r(x_1,x_2), r(x_2,x_3), r(x_3,x_1)\}.
\end{equation*}

\end{proof}

\begin{remark}
For $k\geq 4$ we cannot give  a nice expression of $r(\underline{x})$
as an explicit function of $\W(\underline{x})$.
It should be noted that the determination of the center $y$ and thus of $r(\ux)$ cannot
be performed by a sequence of local decisions at each coordinate or
even at each subset of coordinates corresponding to each $u$; in other
words property \eqref{e1} of $d(\;\;\; )$ does not hold  for $r$ and it
makes it more difficult to study. 
However for $k$ randomly chosen points, the distances of each point to
the center $y$ of the smallest ball containing them are expected to
have about the same value,
in other words the points are expected to be close to the border of the ball. When it is the
case, the radius of the $k$ points is approximated by a much nicer function, 
 called the average radius (or  moment of inertia), 
introduced in \cite{Bli1}. 
\end{remark}

\subsection{The average radial distance.}

\begin{definition}
The average radial distance or average radius (or moment distance or moment of inertia) of $k$ elements of $H_n$ is defined by:
\begin{align*}
\rr(x_1,\dots,x_k)&=\min_y \frac{1}{k}\sum_{1\leq i\leq k}  d(y,x_i).
\end{align*}
\end{definition}

\begin{proposition}\label{prop m} The average radius has the properties:
\begin{enumerate}
\item $\rr(x_1,x_2)= \frac{d(x_1,x_2)}{2}$.
\item For all permutation $\tau$ of $\{1,\dots,k\}$, $\rr(x_1,\dots,
  x_k)=\rr(x_{\tau(1)},\dots,x_{\tau(k)})$.
\item For all $\sigma\in \Aut(H_n)$,
  $\rr(x_1,\dots,x_k)=\rr(\sigma(x_1),\dots,\sigma(x_k))$. In terms of the
  weight distribution $\W(\ux)=(n_w(\ux))_{w\in \F_2^{k-1}}$,
\begin{equation*}
\rr(x_1,\dots, x_k)=\frac{1}{k}\sum_{w\in \F_2^{k-1}} \min(wt(w), k-wt(w))n_w(\ux)
\end{equation*}
\item For all $k$,
\begin{equation*}
\rr(x_1,\dots,x_k)\geq \frac{1}{k}\sum_{i=1}^k \rr(x_1,\dots,x_{i-1}, x_{i+1},\dots, x_k) 
\end{equation*}
\item The above inequality is an equality for $k=1\mod 2$. In
  particular, we have
\begin{equation*}
\rr(x_1,x_2,x_3)=\frac{\rr(x_1,x_2)+\rr(x_2,x_3)+\rr(x_3,x_1)}{3}
\end{equation*}
\item For all $k$,
\begin{equation*}
\rr(x_1,\dots,x_k)\leq \frac{2(k-1)}{k(k-2)}\sum_{i=1}^k \rr(x_1,\dots,x_{i-1}, x_{i+1},\dots, x_k).
\end{equation*}
\item ``Triangular'' inequality: for all $y\in H_n$,
\begin{equation*}
\rr(x_1,\dots,x_k)\leq \frac{1}{k-1}\sum_{i=1}^k
\rr(x_1,\dots,x_{i-1},y,x_{i+1},\dots,x_k).
\end{equation*}
\end{enumerate}
\end{proposition}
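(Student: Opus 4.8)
The plan is to derive everything from the exact expression for $\rr$ in terms of the symmetrized weight distribution displayed in~(iii). The crucial point, and the feature that distinguishes $\rr$ from the radius $r$, is that the minimisation over the centre $y$ decouples across coordinates: since $d(y,x_i)=\sum_{j=1}^n d(y_j,(x_i)_j)$ by~\eqref{e1}, we have $\sum_{i=1}^k d(y,x_i)=\sum_{j=1}^n\bigl(\sum_{i=1}^k d(y_j,(x_i)_j)\bigr)$, so each bit $y_j$ may be chosen independently to minimise the inner sum. For a coordinate whose column in $M(\ux)$ equals $u\in\F_2^k$, choosing $y_j=0$ costs $wt(u)$ and $y_j=1$ costs $k-wt(u)$, so the minimal cost is $\min(wt(u),k-wt(u))$. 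Summing over columns and grouping $u$ with $\ovu$ (which give the same cost) yields the announced formula; property~(i) is then the case $k=2$ (or a direct use of the triangle inequality with centre $x_1$), and~(ii) is the evident symmetry of $\sum_i d(y,x_i)$ in the $x_i$. Invariance~(iii) follows because each $\sigma\in\Aut(H_n)$ preserves $d$ and permutes the possible centres bijectively.

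For~(iv)--(vii) I would reduce to a single coordinate, i.e. to $n=1$, using this formula: writing $\phi(a):=\min(a,k-a)$ and letting $a$ be the weight of a column $u\in\F_2^k$, all quantities involved are \emph{exact} linear combinations of the column counts $n_u(\ux)$ with explicit coefficients, so each inequality becomes a pointwise inequality between coefficient functions of $a$. Concretely, deleting the $i$-th coordinate of $u$ turns the weight $a$ into $a-u_i$, so that $\sum_{i=1}^k(k-1)\rr(\text{delete }x_i)$ contributes, per column, the quantity $G(a):=a\,\min(a-1,k-a)+(k-a)\,\min(a,k-1-a)$, whereas $k\,\rr(\ux)$ contributes $\phi(a)$ per column.

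Inequality~(iv) then amounts to $H(a):=(k-1)\phi(a)-G(a)\ge 0$ for every integer $a$ with $0\le a\le k$. A short case analysis according to whether $a<k/2$, $a>k/2$, or $a=k/2$ shows that $G(a)=(k-1)\phi(a)$, hence $H(a)=0$, in the first two ranges, while $H(k/2)=k/2>0$ in the balanced case. Since $a=k/2$ is an integer precisely when $k$ is even, this simultaneously proves~(iv) and shows that equality holds for all admissible $\ux$ exactly when $k$ is odd, which is~(v) (the case $k=3$ being the displayed identity).

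Inequalities~(vi) and~(vii) are handled by the same per-coordinate reduction. For~(vi) one checks the pointwise bound $(k-2)\phi(a)\le 2G(a)$, which again follows from the explicit value of $G(a)$ in the three ranges. For the triangular inequality~(vii), replacing $x_i$ by a fixed $y$ changes a column weight $a$ into $a$, $a+1$, or $a-1$ according to $u_i$ and to the bit $b=y_j$; with $y$ fixed the reduction to $n=1$ gives, for each $b\in\{0,1\}$, an elementary inequality of the form $(k-1)\phi(a)\le\sum_{i}\phi\bigl(a^{(i\to b)}\bigr)$, and the two cases $b=0$ and $b=1$ are exchanged by complementing all bits. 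I expect the only delicate point throughout to be the bookkeeping in the balanced regime $a\approx k/2$, where the parity of $k$ governs whether the inequalities are strict; everything else is routine once the coordinate-wise minimisation underlying~(iii) is established, that structural identity being the real engine of the whole proposition.
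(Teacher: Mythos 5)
Your proof is correct and takes essentially the same approach as the paper: you establish the coordinate-wise minimisation formula of (iii), reduce (iv)--(vii) to $n=1$, and run a case analysis on the column weight $a$ relative to $k/2$ --- your computations $G(a)=(k-1)\phi(a)$ for $a\neq k/2$ and $H(k/2)=k/2>0$ are exactly the paper's case table ($(k-1)\sum_i\rr = (k-1)w$ for $w<k/2$, $k(k-2)/2$ for $w=k/2$) in different notation, and they yield (iv), (v) and (vi) just as in the paper. The one-variable checks you leave as ``elementary'' for (vii) do hold (for $b=0$ the inequality $(k-1)\phi(a)\leq\sum_{i}\phi(a-u_i+b)$ reduces to $(a-1)\phi(a)\leq a\,\phi(a-1)$, which is immediate in each range of $a$), so there is no gap.
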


\begin{proof}
Properties (i) (ii) and the $\Aut(H_n)$-invariance are trivial. 

If the $j$-th column of the matrix $M(\ux)$ equals $u\in \F_2^{k}$,
the contribution of this column in $\sum_i d(y,x_i)$ is 
equal to $wt(u)$ if $y_j=0$ and to $wt(\overline{u})$ if $y_j=1$. So
the minimum of this sum over all $y$ equals
\begin{equation*}
\sum_u \min(wt(u),wt(\overline{u}))n_u(\ux).
\end{equation*}
which leads to the formula announced in (iii).
It also shows that 
\begin{equation*}
\rr(x_1,\dots,x_k)=\sum_{j=1}^n \rr((x_1)_j,\dots, (x_k)_j).
\end{equation*}
Consequently, in order to prove  the remaining assertions, we can assume $n=1$.
Let the weight of $\ux$ be denoted by $w$. Without
loss of generality we assume that either $w< k/2$ or 
$w=k/2$. This last case can only happen if
$k=0\mod 2$. We prove (v) and (vi): in the case $w<k/2$, removing $x_i=1$ 
makes $k\rr(\ux)$ drop by $1$ while removing $x_i=0$
does not change $k\rr(\ux)$. In the case $w=k/2$, $k\rr(\ux)$
always drops by $1$.
In other words,
\begin{equation*}
(k-1)\rr(x_1,\dots, x_{i-1},x_{i+1},\dots, x_k)=\left\{\begin{array}{ll}
w -1 &\text{ if }x_i=1\text{ and }w<\frac{k}{2}\\
w  &\text{ if }x_i=0\text{ and }w<\frac{k}{2}\\
w -1 &\text{ if } w=\frac{k}{2}
\end{array}\right.
\end{equation*}
and
\begin{equation*}
(k-1)\sum_{i=1}^k \rr(x_1,\dots, x_{i-1},x_{i+1},\dots, x_k)=\left\{\begin{array}{ll}
(k-1) w &\text{ if }w<\frac{k}{2}\\
k(k-2)/2 &\text{ if } w=\frac{k}{2}.
\end{array}\right.
\end{equation*}
We obtain 
\begin{equation*}
\frac{1}{k}\sum_{i=1}^k \rr(x_1,\dots, x_{i-1},x_{i+1},\dots, x_k)=\left\{\begin{array}{ll}
\rr(\ux) &\text{ if } w\neq \frac{k}{2}\\
\frac{(k-2)}{(2k-2)}\rr(\ux) &\text{ if } w=\frac{k}{2}
\end{array}\right.
\end{equation*}
hence the inequalities 
\begin{equation*}
\frac{(k-2)}{(2k-2)}\rr(\ux)\leq \frac{1}{k}\sum_{i=1}^k \rr(x_1,\dots, x_{i-1},x_{i+1},\dots, x_k)\leq \rr(\ux).
\end{equation*}
If $k=1\mod 2$,  the case $w=k/2$ never happens so the second
inequality is always an equality.

For the triangular inequality (vii), we find
\begin{align*}
k\sum_{i=1}^k \rr(x_1,\dots, x_{i-1},&y,x_{i+1},\dots, x_k)=\\
&\left\{\begin{array}{ll}
(k-1)w+k &\text{ if } w<\lfloor \frac{k}{2}\rfloor  \text{ and } y=1\\
(k-1)w &\text{ if } w=\frac{k}{2} \text{ and } y=1\\
kw &\text{ if } w=\frac{k-1}{2} \text{ and } y=1\\
(k-1)w &\text { if } y=0
\end{array}\right.
\end{align*}
hence 
\begin{equation*}
k\sum_{i=1}^k \rr(x_1,\dots, x_{i-1},y,x_{i+1},\dots, x_k)\geq (k-1)w=
k(k-1)\rr(\ux).
\end{equation*}
\end{proof}

\subsection{Relationships between $d$, $r$, $\rr$}

\begin{proposition}\label{prop compare}
The following hold:
\begin{enumerate}
\item For all $\ux=(x_1,\dots,x_k)\in H_n^{k}$,
\begin{equation*}
\frac{1}{k} d(\ux)\leq \rr(\ux)\leq r(\ux)\leq d(\ux)
\end{equation*}
and 
\begin{equation*}
\rr(\ux)\leq \frac{1}{2} d(\ux).
\end{equation*}
\item For $k=2,3$, $d(\ux)=k\rr(\ux)$.
\item If $\rr(\ux)=r(\ux)$ then the center of any of the balls of minimal radius
  $r(\ux)$
containing the points $(x_1,\dots,x_k)$ is equidistant to these points.
  The converse is false, in the sense that the points may be
  equidistant to some $y$ while $\rr(\ux)<r(\ux)$.
\end{enumerate}
\end{proposition}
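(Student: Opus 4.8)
The plan is to lean on the two additivity identities already established, namely $d(\ux)=\sum_{j=1}^n d((x_1)_j,\dots,(x_k)_j)$ from Proposition~\ref{prop d} and the analogous coordinate-wise identity for $\rr$ from Proposition~\ref{prop m}. These reduce any inequality relating $d$ and $\rr$ to the single-coordinate case $n=1$, where $\rr$ is given explicitly by $\frac1k\min(wt(u),wt(\ovu))$. The main obstacle, explicitly flagged in the Remark following Proposition~\ref{prop r}, is that $r$ does \emph{not} satisfy any such additivity; hence the two inequalities involving $r$ cannot be reduced coordinatewise and must be argued globally.

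For part (i) I would first dispose of the inequalities containing $r$ by direct arguments. Letting $y_0$ be a center realizing the minimal enclosing radius, so that $\max_i d(y_0,x_i)=r(\ux)$, I get $\rr(\ux)\le \frac1k\sum_i d(y_0,x_i)\le \max_i d(y_0,x_i)=r(\ux)$, since an average never exceeds a maximum. For $r(\ux)\le d(\ux)$ I would take the center $y=x_1$ and note that every coordinate where $x_1$ and $x_i$ differ is a coordinate where the $x_j$ are not all equal, so $d(x_1,x_i)\le d(\ux)$ and therefore $r(\ux)\le \max_i d(x_1,x_i)\le d(\ux)$. The remaining two inequalities $\frac1k d(\ux)\le \rr(\ux)$ and $\rr(\ux)\le \frac12 d(\ux)$ then reduce by additivity to $n=1$: writing $w$ for the column weight, one has $d=1$ and $\rr=\frac1k\min(w,k-w)$ exactly when $0<w<k$ (both vanishing otherwise), and the elementary bounds $1\le\min(w,k-w)\le k/2$ yield both inequalities simultaneously.

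Part (ii) is the equality case of the same one-coordinate analysis. For $k=2$ it is precisely Proposition~\ref{prop m}(i). For $k=3$ the single-column computation shows $\min(w,k-w)=1$ for every $0<w<k$, so the coordinate-wise values of $d$ and of $k\rr$ coincide; summing over $j$ via the additivity identities gives $d(\ux)=k\rr(\ux)$. The identity breaks down for $k\ge 4$ precisely because $\min(w,k-w)$ can then exceed $1$.

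For part (iii) I would assemble the inequalities of part (i) into the chain $\rr(\ux)\le \frac1k\sum_i d(y,x_i)\le \max_i d(y,x_i)=r(\ux)$, valid for any center $y$ of a minimal enclosing ball. Under the hypothesis $\rr(\ux)=r(\ux)$ the middle term is squeezed to equal both ends, so the average of the $d(y,x_i)$ equals their maximum, which forces all of them to coincide; thus $y$ is equidistant from the points. To show the converse fails it suffices to exhibit one tuple that is equidistant from some $y$ yet has $\rr(\ux)<r(\ux)$: I would take $k=3$, $n=4$ with $x_1=0000$, $x_2=1111$, $x_3=1100$, for which $y=1010$ lies at distance $2$ from each $x_i$, while $r(\ux)=2$ (no radius-$1$ ball can contain both $0000$ and $1111$) and $\rr(\ux)=\frac43$, attained at the center $y=x_3$. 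I expect the squeeze argument to be entirely routine; the only step requiring any ingenuity is locating such an explicit counterexample, which a brief inspection of small configurations readily provides.
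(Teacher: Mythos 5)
Your proof is correct and takes essentially the same route as the paper's: the same average-versus-maximum squeeze for $\rr(\ux)\le r(\ux)$ and for part (iii), the same coordinatewise reduction via $1\le\min(w,k-w)\le k/2$ for the inequalities linking $d$ and $\rr$, and your choice $y=x_1$ for $r(\ux)\le d(\ux)$ is just a concrete instance of the paper's center agreeing with the codewords on the all-equal coordinates. Your explicit counterexample $(0000,1111,1100)$ is likewise an instance of the paper's parametric family ($n_{01}=0$, $n_{10}=n_{11}=2$, all even, with $2n_{01}<n_{10}+n_{11}$), and it checks out: $r(\ux)=2$, $\rr(\ux)=4/3$, with $y=1010$ equidistant.
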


\begin{proof}
Since 
\begin{equation*}
\frac{1}{k}\sum_{i=1}^k d(y,x_i)\leq \max_i d(y,x_i),
\end{equation*}
we obviously have $\rr(\ux)\leq r(\ux)$. 
From
\begin{equation*}
1\leq \min(wt(w),k-wt(w))\leq k/2
\end{equation*}
for $w\neq 0^{k-1}$ and from the expressions given in
Proposition \ref{prop m} (iii) for $\rr(\ux)$ and in 
Proposition~\ref{prop d} (iii) for $d(\ux)$ we have
\begin{equation*}
\frac{1}{k}d(\ux) \leq \rr(\ux) \leq \frac{1}{2} d(\ux).
\end{equation*}
Let $J$ be the set of coordinates where $\ux_j\in \{0^k,1^k\}$. If we
choose $y$ such that $y_j$ agrees with $(x_i)_j$ for $j\in J$, then 
$d(y,x_i)\leq n-|J|=d(\ux)$. Thus $r(\ux)\leq d(\ux)$. This concludes
point (i).
 
(ii) is obvious from previous formulas.

Let us assume that $\rr(\ux)=r(\ux)=r$ and let $y$ be the center of a
ball of radius $r$ containing all $x_i$. Then we have the inequalities
\begin{equation*}
r=\rr(\ux)\leq \frac{1}{k}\sum_i d(y,x_i)\leq \max_i d(y,x_i)=r
\end{equation*}
thus $\frac{1}{k}\sum_i d(y,x_i)=\max_i d(y,x_i)$ which means
that all $d(y,x_i)$ are equal to $r$. 

We build a counterexample with $k=3$. If $n_{01}, n_{10}, n_{11}$ are
even numbers, the points will be equidistant
to some point $y$ with $y_w=n_w/2$. We assume moreover that 
$n_{01}\leq n_{10}\leq n_{11}$. From Proposition \ref{prop r},
$r(\ux)=(n_{10}+n_{11})/2$ and from Proposition \ref{prop m},
$\rr(\ux)=(n_{01}+n_{10}+n_{11})/3$ so if $2n_{01}<n_{10}+n_{11}$ we are done.
\end{proof}

\section{Code invariants and their significance}\label{invariants}

\subsection{Code invariants}

\begin{definition}\label{def}
For any $C\subset H_n$, and for $f=d,r,\rr$, we define
\begin{equation*}
f_{k-1}(C)=\min f(x_1,\dots, x_k)
\end{equation*}
where the minimum is taken over all $k$-tuples of pairwise distinct 
elements of $C$. Moreover we define
\begin{equation*}
d_{k-1}^{\text{\em aff}}(C)=\min d(x_1,\dots, x_k)
\end{equation*}
where the minimum is taken over all $k$-tuples of affinely independent 
elements of $C$. Following standard notation in coding theory, 
we let $A_{k-1}(n,f,m)$ be the maximal number of elements that a code
$C\subset H_n$ can have under the condition $f_{k-1}(C)\geq m$.
\end{definition}


\begin{proposition}
The following hold:
\begin{enumerate}
\item $d_1(C)=d_1^{\text{\em aff}}(C)$ is the Hamming distance of the code $C$.
\item $d_2(C)=d_2^{\text{\em aff}}(C)$.
\item If $C$ is a linear code, and
$2^{t-1}< k\leq 2^t$, $d_{k-1}(C)=d_{t}^{\text{\em aff}}(C)$. 
\item If $C$ is a linear code, $\dkaff(C)$ coincides with the minimum $(k-1)$-th
  generalized weight as defined in \cite{W}, namely:
\begin{equation*}
\dkaff(C)=\min \{ w(D) : D\subset C, D\  linear, \dim(D)=k-1\}.
\end{equation*}
where $w(D)$ is the set of coordinates $i$ at which at least one
element of $D$ is non zero.
\end{enumerate}
\end{proposition}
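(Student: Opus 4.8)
The plan is to verify each of the four claims by analyzing how affine independence of codewords interacts with the generalized distance, working always over $\F_2$ where the affine span of $x_1,\dots,x_{k-1}$ is $\{\sum \alpha_i x_i : \sum\alpha_i = 1,\ \alpha_i\in\F_2\}$. For claims (i) and (ii) I would argue that the constraints coincide because the relevant tuples are automatically affinely independent. For (i), with $k=2$, two distinct points $x_1,x_2$ are always affinely independent (a single nonzero difference $x_2-x_1$), so the minimum over distinct pairs equals the minimum over affinely independent pairs, and by Proposition~\ref{prop d}(i) this is the Hamming distance. For (ii), with $k=3$, three pairwise distinct points in characteristic $2$ are affinely independent \emph{unless} one is the affine combination of the other two; since over $\F_2$ the only affine combination of $x_1,x_2$ other than themselves is $x_1+x_2$, the potentially problematic configuration is $\{x_1,x_2,x_1+x_2\}$. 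The key observation is that on such a degenerate triple, Proposition~\ref{prop d}(iv) gives $d(x_1,x_2,x_3)=d(x_1,x_2)=d_1(C)$, which is no smaller than any value attained on a genuinely affinely independent triple contributing to $d_2^{\text{aff}}(C)$; combined with $d_2^{\text{aff}}(C)\le d_2(C)$ (the affine-independent minimum is over a smaller set), this forces equality.

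The substantive claims are (iii) and (iv), and I expect (iii) to be the main obstacle. For (iii), the underlying principle is that for a \emph{linear} code the affine span of any $k$ points and the linear span of their pairwise differences are tightly linked, and that the set of coordinates counted by $d(x_1,\dots,x_k)$ depends only on the linear span of the differences $x_2-x_1,\dots,x_k-x_1$. The plan is first to reduce $d_{k-1}(C)$ to a statement about dimensions: among $k$ distinct codewords, the differences span a linear subspace $D$, and by Proposition~\ref{prop d}(iii) the generalized distance equals $\sum_{w\ne 0}n_w$, which one identifies with $w(D)$, the support of $D$. The combinatorial heart is then to show that over $\F_2$, by choosing $k$ points optimally one can realize $k$ distinct codewords whose differences span a space of dimension exactly $t$ when $2^{t-1}<k\le 2^t$ (since a $t$-dimensional $\F_2$-space has $2^t$ points, which can all be taken as the $x_i$ when $k=2^t$, and a suitable subset when $k<2^t$), while no configuration of $k$ distinct points can have differences spanning dimension less than $t$ (as $k$ distinct points need at least $2^{t-1}<k$ room, forcing dimension $\ge t$). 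The delicate point is confirming that the minimizing affinely independent $k$-tuple for $\daff_t(C)$ and the minimizing distinct $k$-tuple for $d_{k-1}(C)$ pick out the \emph{same} minimal support of a $t$-dimensional subcode.

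For claim (iv), the plan is to unwind definitions: a set of $k-1$ affinely independent codewords $x_1,\dots,x_{k-1}$ (with, say, $x_1$ as base point) yields differences spanning a $(k-2)$-dimensional linear space, but the cleaner route is to observe that $\daff_{k-1}(C)$ is defined via $k$ affinely independent points, whose $k$ differences from a base point span a linear subspace $D$ of dimension $k-1$. I would then show $d(x_1,\dots,x_k)=w(D)$ directly: a coordinate $j$ is counted by $d$ exactly when the entries $(x_1)_j,\dots,(x_k)_j$ are not all equal, which (after translating so the base point is $0$) happens precisely when some element of $D$ is nonzero at $j$, i.e.\ $j\in w(D)$. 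The correspondence between affinely independent $k$-tuples and linear subspaces $D$ of dimension $k-1$ is a bijection up to the choice of affine frame, so minimizing $w(D)$ over $(k-1)$-dimensional linear subcodes matches minimizing $d$ over affinely independent $k$-tuples, yielding Wei's generalized weight. The main care needed here is the bookkeeping of the shift by $\pm 1$ between ``$k$ affinely independent points'' and ``dimension $k-1$,'' making sure the indices in $\daff_{k-1}$ and $\dim(D)=k-1$ are consistent with Definition~\ref{def}.
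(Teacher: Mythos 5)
The paper itself offers no argument for this proposition (its proof reads ``Obvious''), so the only real question is whether your sketch stands on its own. Your treatment of (i), (iii) and (iv) is sound: the identification $d(x_1,\dots,x_k)=w\bigl(\langle x_2-x_1,\dots,x_k-x_1\rangle\bigr)$, the counting argument ($k$ distinct points force $|V|\ge k>2^{t-1}$, hence $\dim V\ge t$), and the correspondence between affinely independent $k$-tuples in a linear code and $(k-1)$-dimensional subcodes are exactly the needed ingredients. The ``delicate point'' you flag in (iii) is in fact not delicate: it is settled by the monotonicity $w(D)\le w(V)$ for $D\subseteq V$. Any difference-span $V$ of dimension $\ge t$ contains a $t$-dimensional subcode $D$ with $w(D)\le w(V)$, giving $d_{k-1}(C)\ge d_t^{\mathrm{aff}}(C)$, while choosing any $k$ of the $2^t\ge k$ elements of a minimizing $D$ (all of which lie in $C$ by linearity) gives the reverse inequality; no matching of minimizers is required.

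Your argument for (ii), however, is wrong as written, on three counts. First, over $\F_2$ the point $x_1+x_2$ is \emph{not} an affine combination of $x_1,x_2$: its coefficients sum to $0$, not $1$, so the affine hull of two points is just $\{x_1,x_2\}$. Consequently \emph{every} pairwise-distinct triple in $H_n$ is affinely independent, and (ii) holds because the two minima run over the very same set of triples --- there is no ``problematic configuration'' to handle. Second, your treatment of that (vacuous) configuration is itself false: Proposition~\ref{prop d}(iv) does not apply to $(x_1,x_2,x_1+x_2)$, and indeed $d(x_1,x_2,x_1+x_2)\ne d(x_1,x_2)$ in general (take $x_1=(1,1)$, $x_2=(1,0)$, $x_3=x_1+x_2=(0,1)$: then $d(x_1,x_2)=1$ but $d(x_1,x_2,x_3)=2$); moreover $d(x_1,x_2)$ for a particular pair need not equal $d_1(C)$, and even if it did, the inference that this value is ``no smaller than any value attained on an affinely independent triple'' amounts to $d_1(C)\ge d_2^{\mathrm{aff}}(C)$, which is false in general. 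Third, the inequality you invoke to close the argument points the wrong way: affinely independent triples form a \emph{subset} of pairwise-distinct triples, so the minimum over them is \emph{larger or equal}, i.e.\ $d_2^{\mathrm{aff}}(C)\ge d_2(C)$, not $\le$. None of this damages the proposition itself --- the one-line fix is the observation about affine hulls above --- but the chain of reasoning you wrote for (ii) would not survive scrutiny.
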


\begin{proof}
Obvious.
\end{proof}

\noindent
{\bf Remark.} The quantity $d_k(C)$ is more natural and
easier to deal with than the more intricate $d_k^{\text{aff}}(C)$.
Unfortunately, $d_k(C)$ only coincides with the minimum $k$-th generalized
weight of a linear code for $k=1,2$, hence the definition 
of $d_k^{\text{aff}}(C)$, originally stated in \cite{CLZ}.
In \cite{Bas} yet another generalisation of the minimum $k$-th generalized
weight to non-linear codes is introduced that does not consider
affinely independent sets of vectors. We will not dwell on the
differences here and our study will
mostly focus on the quantity $d_k(C)$ itself, of interest in its own
right since it has a natural interpretation in terms of list decoding
``radius'' for lists of size $k$ when decoding from erasures (see
section \ref{sec:list} below).

\begin{proposition}
For $f=d,\daff,r,\rr$ and for any code $C$,
\begin{equation*}
f_{k-1}(C)\leq f_{k}(C)
\end{equation*}
\end{proposition}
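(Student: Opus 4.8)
The plan is to deduce the inequality from a single monotonicity property shared by all four functions: deleting one of the points of a tuple cannot increase the value of $f$. Concretely, I would first record that for each $f\in\{d,r,\rr\}$ and every $(k+1)$-tuple $\ux=(x_1,\dots,x_{k+1})\in H_n^{k+1}$ one has
\begin{equation*}
f(x_1,\dots,x_{k+1})\ \geq\ \min_{1\leq i\leq k+1} f(x_1,\dots,x_{i-1},x_{i+1},\dots,x_{k+1}).
\end{equation*}
For $r$ this is immediate from Proposition~\ref{prop r}(iv), which even gives the maximum over $i$ on the right-hand side. For $d$ and for $\rr$ the relevant statements, namely the second inequality of Proposition~\ref{prop d}(viii) and Proposition~\ref{prop m}(iv), bound $f(\ux)$ below by the \emph{average} of the values obtained by deleting one point; since an average is never smaller than a minimum, the displayed inequality follows in these two cases as well. (For $d$ one may alternatively argue directly: the set of coordinates at which the points are not all equal can only grow when a point is added.)

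Next I would run the extraction argument. If $C$ has fewer than $k+1$ elements, the minimum defining $f_k(C)$ is taken over the empty set, so $f_k(C)=+\infty$ and there is nothing to prove; otherwise choose a $(k+1)$-tuple $(x_1,\dots,x_{k+1})$ of pairwise distinct elements of $C$ attaining $f_k(C)$. By the displayed inequality there is an index $i$ for which the $k$-tuple obtained by deleting $x_i$ satisfies $f(x_1,\dots,x_{i-1},x_{i+1},\dots,x_{k+1})\leq f(x_1,\dots,x_{k+1})=f_k(C)$. This shorter tuple still consists of pairwise distinct elements of $C$, hence is an admissible competitor in the minimum defining $f_{k-1}(C)$, and therefore $f_{k-1}(C)\leq f_k(C)$.

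Finally, the case $f=\daff$ is handled in exactly the same way, using the monotonicity of $d$ together with the fact that any subset of an affinely independent set is itself affinely independent: starting from an affinely independent $(k+1)$-tuple attaining $d_k^{\operatorname{aff}}(C)$ and deleting one point produces an affinely independent $k$-tuple on which $d$ does not exceed $d_k^{\operatorname{aff}}(C)$, so it is admissible for $\dkaff(C)$. The only point that requires a little care is the one flagged above, namely that for $d$ and $\rr$ we have only an averaged lower bound rather than the pointwise ``max'' inequality available for $r$; passing from the average to the minimum is precisely what guarantees the existence of a single deletable point achieving the required drop, and this is the sole (and minor) obstacle in the argument.
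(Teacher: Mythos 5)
Your proof is correct and takes essentially the same route as the paper: both arguments rest on the deletion inequalities of Propositions~\ref{prop d}(viii), \ref{prop r}(iv) and \ref{prop m}(iv), together with the observation that sub-tuples of pairwise distinct (respectively affinely independent) tuples remain admissible. The only cosmetic difference is that you weaken the average (or max) bound to a minimum and extract a witness $k$-tuple from an optimal $(k+1)$-tuple, whereas the paper notes directly that any admissible $(k+1)$-tuple has $f$-value at least $f_{k-1}(C)$ because it dominates an average of quantities each at least $f_{k-1}(C)$; the two wrappings are trivially equivalent.
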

\begin{proof} It follows from Propositions 
\ref{prop d} (viii), \ref{prop r} (iv) and \ref{prop m} (iv) that for
pairwise distinct $\ux$ 
\begin{equation*}
f(x_1,\dots,x_{k+1})\geq f_{k-1}(C)
\end{equation*}
respectively for affinely independent $\ux$
\begin{equation*}
d(x_1,\dots,x_{k+1})\geq d_{k-1}^{\text{aff}}(C).
\end{equation*}
Hence $f_{k-1}(C)\leq f_k(C)$.
\end{proof}

\subsection{List decoding}\label{sec:list}

A list decoding procedure is a decoding procedure that outputs a list
of codewords. The length $L$ of the list is determined in advance. 
This list is usually obtained by the enumeration of all codewords 
in a ball $B(y,r)$. For a given code $C$, the associated value of $r$ 
is known as the $L$-list decoding radius of $C$:

\begin{definition}
The $L$-list decoding radius $R_L(C)$ is the largest value of $r$ such
that,
for all $y\in H_n$, 
\begin{equation*}
\card(B(y,r)\cap C)\leq L.
\end{equation*}
\end{definition}
In the case $L=1$, we recover the notion of the (unique) decoding radius of a code,
$R_1(C)=\lfloor(d(C)-1)/2\rfloor$. This number is also the largest
value of $r$ such that the balls of radius $r$ centered at the
codewords have the property that any $L+1$ of them have an empty
intersection.
A set  of balls with this property is called a $L$-multiple packing.
Thus a classical packing of balls is a $1$-multiple packing.

\begin{proposition}\label{prop list radius}
\begin{equation*}
R_L(C)=r_{L}(C)-1.
\end{equation*}
\end{proposition}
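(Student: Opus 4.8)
The plan is to unwind both definitions and reduce the identity to two inequalities, by showing that the property defining $R_L(C)$ holds at radius $r_L(C)-1$ but fails at radius $r_L(C)$. Write $\rho := r_L(C)$, which by Definition \ref{def} equals the minimum of $r(x_1,\dots,x_{L+1})$ taken over all $(L+1)$-tuples of pairwise distinct codewords. The first thing I would record is a monotonicity fact: since $B(y,r')\subseteq B(y,r)$ whenever $r'\leq r$, the quantity $\card(B(y,r)\cap C)$ is nondecreasing in $r$, so the set of $r$ for which $\card(B(y,r)\cap C)\leq L$ holds simultaneously for all $y$ is a down-set. Hence $R_L(C)$ is genuinely a threshold radius, and it suffices to test the two consecutive values $\rho-1$ and $\rho$.

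For the lower bound $R_L(C)\geq \rho-1$, I would argue by contradiction. If some $y$ satisfied $\card(B(y,\rho-1)\cap C)\geq L+1$, then choosing $L+1$ pairwise distinct codewords $x_1,\dots,x_{L+1}$ inside $B(y,\rho-1)$ would give, directly from the definition of the radial distance, $r(x_1,\dots,x_{L+1})\leq \rho-1<\rho$, contradicting the minimality of $\rho=r_L(C)$. Therefore $\card(B(y,\rho-1)\cap C)\leq L$ for every $y$, which is exactly the condition defining $R_L(C)$ at radius $\rho-1$.

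For the upper bound $R_L(C)\leq \rho-1$, I would exhibit a ball violating the bound at radius $\rho$. By definition of $\rho$ there is an $(L+1)$-tuple of pairwise distinct codewords $x_1,\dots,x_{L+1}$ with $r(x_1,\dots,x_{L+1})=\rho$, and by definition of the radial distance this means there is a center $y$ with $\{x_1,\dots,x_{L+1}\}\subseteq B(y,\rho)$. Then $\card(B(y,\rho)\cap C)\geq L+1>L$, so the defining condition of $R_L(C)$ fails at radius $\rho$, whence $R_L(C)<\rho$. Combining the two inequalities yields $R_L(C)=\rho-1=r_L(C)-1$.

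I do not expect any real obstacle here: the argument is a direct translation of the covering-radius description of $r_L$ into the ball-intersection description of $R_L$. The only points requiring a modicum of care are the bookkeeping of the index shift (the $L$-list radius corresponds to $(L+1)$-tuples, matching the subscript convention $f_{k-1}(C)=\min f(x_1,\dots,x_k)$) and the monotonicity remark that legitimises reading off $R_L(C)$ from the two test radii $\rho-1$ and $\rho$.
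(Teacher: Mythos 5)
Your proof is correct and follows essentially the same route as the paper's: you exhibit a tuple attaining the minimum $r_L(C)$ to produce a ball of radius $r_L(C)$ containing $L+1$ codewords, and you invoke the minimality of $r_L(C)$ to rule out $L+1$ codewords in any ball of smaller radius. The only difference is cosmetic: you make explicit the monotonicity of $r \mapsto \card(B(y,r)\cap C)$, which the paper leaves implicit when deducing $R_L(C) < r_L(C)$.
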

\begin{proof}
There exists $(x_1,\dots,x_{L+1})\in C^{L+1}$ and $y\in H_n$ such that
for all $1\leq i\leq L+1$, $x_i\in B(y,r_{L}(C))$ and $x_i\neq x_j$  thus
$\card(B(y,r_{L}(C))\cap C)=L+1$ and $R_L(C)< r_L(C)$. Moreover,
if $r<r_{L}(C)$, $L+1$ codewords cannot be elements of the same ball of
radius $r$ thus $R_L(C)=r_{L}(C)-1$.
\end{proof}

The notion of list decoding can also be investigated in the framework of
erasure decoding, see \cite{Gu2}. 

\begin{definition}
The $L$-list decoding radius for erasures $R^{\text{\em er}}_L(C)$ is
the largest value of $r$ such that,
for all $E\subset \{1,\dots,n\}$, $\card(E)\leq r$,
and for any $y=(y_i)_{i\not\in E}\in\{0,1\}^{n-\card(E)}$
 \begin{equation*}
\card(\{x\in C : (x_i)_{i\not\in E}=y\})\leq L.
\end{equation*}
\end{definition}

The following proposition, which is a straightforward consequence of
the definition of $d_{L}$, makes generalized distances relevant to
erasure decoding \cite{Gu2,Z,ZC}.

\begin{proposition}
\begin{equation*}
R_L^{\text{\em er}}(C)=d_{L}(C)-1.
\end{equation*}
\end{proposition}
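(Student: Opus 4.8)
The plan is to prove the identity $R_L^{\text{er}}(C)=d_L(C)-1$ by showing that the failure of $L$-list decodability from erasures at a given radius $r$ is \emph{equivalent} to the existence of $L+1$ pairwise distinct codewords whose generalized distance is at most $r$. I would argue both inequalities separately, phrasing everything in terms of the defining set of erasures $E\subset\{1,\dots,n\}$ and the complementary ``surviving'' coordinates $\bar E=\{1,\dots,n\}\setminus E$.

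First I would unwind the definitions to get a clean combinatorial reformulation. Recall that $d_L(C)=\min d(x_1,\dots,x_{L+1})$ over $(L+1)$-tuples of pairwise distinct codewords, and that by Proposition~\ref{prop d}~(iii) the generalized distance $d(x_1,\dots,x_{L+1})$ counts exactly the coordinates $j$ where the entries $(x_1)_j,\dots,(x_{L+1})_j$ are not all equal. The key observation is that $L+1$ distinct codewords $x_1,\dots,x_{L+1}$ agree on the surviving coordinates $\bar E$ (that is, they collapse to a single vector $y=(y_i)_{i\notin E}$, violating the list bound) \emph{if and only if} the set of coordinates where they are not all equal is contained in $E$. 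That set of coordinates has cardinality exactly $d(x_1,\dots,x_{L+1})$, so the condition becomes $d(x_1,\dots,x_{L+1})\le\card(E)$.

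With this in hand the two inequalities are direct. For $R_L^{\text{er}}(C)\ge d_L(C)-1$: take any $r\le d_L(C)-1$ and any erasure set $E$ with $\card(E)\le r$; if some $y$ were agreed upon by $L+1$ distinct codewords, the reformulation gives $d_L(C)\le d(x_1,\dots,x_{L+1})\le\card(E)\le r\le d_L(C)-1$, a contradiction, so $r$ is admissible. For the reverse $R_L^{\text{er}}(C)\le d_L(C)-1$: choose $(L+1)$ pairwise distinct codewords $x_1,\dots,x_{L+1}$ achieving the minimum $d(x_1,\dots,x_{L+1})=d_L(C)$, let $E$ be exactly the (at most $d_L(C)$) coordinates where they are not all equal, and set $y$ to be their common restriction on $\bar E$. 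Then $\card(E)\le d_L(C)$ and $L+1$ distinct codewords restrict to $y$, so erasure decoding with $r=d_L(C)$ fails; hence $R_L^{\text{er}}(C)<d_L(C)$.

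The only point requiring care is the direction of the inequality $\card(E)\le r$ in the definition versus the \emph{exact} count $d(x_1,\dots,x_{L+1})$: since $E$ in the definition is allowed to be any set of size at most $r$, and a smaller erasure pattern is easier to decode, it suffices to test the worst case where $E$ equals precisely the disagreement set, which is what the second half exploits. I do not expect a genuine obstacle here—the statement is, as the authors note, ``a straightforward consequence of the definition of $d_L$''—but the cleanest write-up is to isolate the equivalence ``$L+1$ codewords share a restriction to $\bar E$ $\iff$ their disagreement set $\subseteq E$'' as the single load-bearing step and let both inequalities fall out of it.
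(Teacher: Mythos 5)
Your proof is correct, and it is precisely the definition-unwinding argument the authors have in mind: the paper states this proposition without proof, calling it ``a straightforward consequence of the definition of $d_L$,'' and your equivalence (``$L+1$ distinct codewords share a restriction to the unerased coordinates $\iff$ their disagreement set, of size $d(x_1,\dots,x_{L+1})$, is contained in $E$'') is exactly the straightforward consequence being invoked. Both directions are handled correctly, so your write-up simply makes explicit what the paper leaves implicit.
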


\section{Upper bounds for $d_k$,  $r_k$,  $\rr_k$.}\label{bounds}

In this section we gather the analogs of the Singleton, Hamming,
Plotkin and Elias bounds for $f=d,r,\rr$. 
The methods are well-known and some of the bounds may be found
explicitely in the litterature, but not always in form precise enough
for numerical computation (in particular only asymptotic versions of
the Elias bounds can be found) which we need to compare them to the
new SDP bounds.

\subsection{The Singleton bound} This bound for $d$ is the most
elementary and is a natural generalisation of the classical Singleton
bound for the ordinary Hamming distance.

\begin{proposition}
Let $C\subset H_n$. Then, 
if $d_{k-1}(C)\geq d_{k-1}$ 

\begin{equation*}
|C|\leq (k-1)2^{n-d_{k-1}+1}.
\end{equation*}
\end{proposition}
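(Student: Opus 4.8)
The plan is to imitate the classical Singleton argument by puncturing the code on a well-chosen set of coordinates and controlling the sizes of the fibers of the resulting projection. Concretely, fix any subset $S\subset\{1,\dots,n\}$ of size $\card(S)=n-d_{k-1}+1$ (so that its complement has $d_{k-1}-1$ coordinates), and let $\pi:H_n\to \F_2^{S}$ be the projection that forgets the coordinates outside $S$. Since the total number of possible images is $2^{\card(S)}=2^{n-d_{k-1}+1}$, the bound $|C|\leq (k-1)2^{n-d_{k-1}+1}$ will follow as soon as I show that every fiber $\pi^{-1}(v)\cap C$ contains at most $k-1$ codewords.

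The key step is this fiber bound, and it is here that the hypothesis $d_{k-1}(C)\geq d_{k-1}$ enters. Suppose, for contradiction, that some fiber contains $k$ codewords $x_1,\dots,x_k$. They all agree on the $\card(S)=n-d_{k-1}+1$ coordinates indexed by $S$, so the set of coordinates where they are \emph{not} all equal is contained in the complement of $S$ and has size at most $d_{k-1}-1$. By the definition of the generalized Hamming distance this means
\begin{equation*}
d(x_1,\dots,x_k)\leq n-\card(S)=d_{k-1}-1.
\end{equation*}
The one point to check carefully is that these $k$ codewords are pairwise distinct, so that the constraint applies to them: indeed, distinct elements of $C$ lying in the same fiber of $\pi$ must differ at some coordinate outside $S$, hence are genuinely distinct as required. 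Thus $(x_1,\dots,x_k)$ is an admissible $k$-tuple, and the displayed inequality contradicts $d(x_1,\dots,x_k)\geq d_{k-1}(C)\geq d_{k-1}$.

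Having established that each of the $2^{n-d_{k-1}+1}$ fibers meets $C$ in at most $k-1$ points, I conclude by summing over fibers that $|C|\leq (k-1)2^{n-d_{k-1}+1}$. I do not expect any real obstacle here: the argument is a direct pigeonhole over the punctured space, and for $k=2$ it specializes to the familiar statement that puncturing a code of minimum distance $d_1$ on $d_1-1$ coordinates stays injective. The only subtlety worth flagging is the pairwise-distinctness verification above, which is exactly what makes the invariant $d_{k-1}(C)$ (defined via pairwise distinct tuples) the right quantity to feed into the count.
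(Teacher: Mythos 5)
Your proof is correct and follows essentially the same route as the paper: restrict (puncture) the code to a fixed set of $n-d_{k-1}+1$ coordinates, observe by pigeonhole that more than $(k-1)2^{n-d_{k-1}+1}$ codewords would force $k$ distinct codewords into one fiber, and derive the contradiction $d(x_1,\dots,x_k)\leq d_{k-1}-1$. Your write-up is in fact slightly more careful than the paper's, which contains a small typo (``at most equal to $d_{k-1}+1$'' where $d_{k-1}-1$ is meant).
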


\begin{proof}
Consider the restriction of
the codewords on a fixed set of $(n-d_{k-1}+1)$ indices. The number of
possible images is of course $2^{n-d_{k-1}+1}$. If 
$|C|>(k-1)2^{n-d_{k-1}+1}$, there is a subset of $k$ codewords having
the same image. Thus they have a generalized Hamming distance at most
equal to $d_{k-1}+1$ and we have a contradiction.
\end{proof}

It is worth noticing that the Singleton bound for $k=3$ is tight
for $d=3$ and for $d=n$.

\subsection{Hamming type bound}

This volume type bound is established in \cite{CLZ}[Prop II.I] for
$d_{k-1}$ and for linear codes and generalized to the non-linear case
in \cite{Bas}. 
We take the following notations: the number of elements of a ball of
radius $r$ in $H_n$ is denoted $b^n_r$ or $b_r$ if $n$ is clear from
the context.
We recall the formula
\begin{equation*}
b_r^n =\sum_{k=0}^r \binom{n}{k}.
\end{equation*}

\begin{proposition}
Let $C\subset H_n$. Then 
\begin{enumerate}
\item If $r_{k-1}(C)\geq r_{k-1}$ or $\rr_{k-1}(C)\geq r_{k-1}$ then 
\begin{equation*}
|C|\leq \frac{(k-1)2^{n}}{b_{r_{k-1}-1}^n}
\end{equation*}

\item If $d_{k-1}(C)\geq d_{k-1}$ then 

\begin{equation*}
|C|\leq \frac{(k-1)2^n}{b^n_{\lceil {d_{k-1}}/{k}\rceil-1}}
\end{equation*}

\item If $\dkaff(C)\geq d_{k-1}$ then 
\begin{equation*}
|C|\leq \frac{2^{n+k-2}}{b^n_{\lceil {d_{k-1}}/{k}\rceil-1}}
\end{equation*}
\end{enumerate}
\end{proposition}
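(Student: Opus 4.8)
The plan is to prove all three Hamming-type bounds by a single packing argument, exploiting the fact that the invariants $r_{k-1}$, $\rr_{k-1}$ and $d_{k-1}$ all control how many codewords can lie in a sufficiently small ball. First I would fix a radius $\rho$ and count, for every center $y\in H_n$, the number of codewords in $B(y,\rho)$. The strategy is to choose $\rho$ so that no $k$ pairwise distinct codewords can simultaneously lie in $B(y,\rho)$: this forces $\card(B(y,\rho)\cap C)\leq k-1$ for every $y$. Then I would sum this local bound over all $2^n$ centers: each codeword $x\in C$ is counted once for every $y$ with $d(x,y)\leq \rho$, that is $b^n_\rho$ times, giving $|C|\,b^n_\rho=\sum_y \card(B(y,\rho)\cap C)\leq (k-1)2^n$, hence $|C|\leq (k-1)2^n/b^n_\rho$.

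The core of the argument is therefore to determine, for each of the three invariants, the largest admissible $\rho$. For (i), if $r_{k-1}(C)\geq r_{k-1}$ then by definition of the radial distance any $k$ pairwise distinct codewords have radius at least $r_{k-1}$, so they cannot all sit inside a ball of radius $r_{k-1}-1$; taking $\rho=r_{k-1}-1$ gives $|C|\leq (k-1)2^n/b^n_{r_{k-1}-1}$. The same conclusion holds under the hypothesis $\rr_{k-1}(C)\geq r_{k-1}$ because $\rr(\ux)\leq r(\ux)$ by Proposition~\ref{prop compare}~(i), so the bound on the average radius forces the same bound on the true radius, and the identical packing argument applies. For (ii), I would convert the generalized-distance hypothesis into a radius hypothesis: if $d_{k-1}(C)\geq d_{k-1}$ then for pairwise distinct codewords $r(\ux)\geq \frac{1}{k}d(\ux)\geq d_{k-1}/k$, again by Proposition~\ref{prop compare}~(i). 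Since $r(\ux)$ is an integer this gives $r(\ux)\geq \lceil d_{k-1}/k\rceil$, so no $k$ distinct codewords lie in a ball of radius $\lceil d_{k-1}/k\rceil-1$, and the packing bound yields $|C|\leq (k-1)2^n/b^n_{\lceil d_{k-1}/k\rceil-1}$.

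Part (iii) is the one requiring the extra idea, and I expect it to be the main obstacle, since the affine invariant $\dkaff$ only controls \emph{affinely independent} tuples rather than all pairwise distinct ones, so the naive packing bound with $k-1$ codewords per ball is not immediately available. The plan is to sharpen the local count: inside a ball $B(y,\rho)$ with $\rho=\lceil d_{k-1}/k\rceil-1$ the codewords cannot contain an affinely independent subset of size $k$ (such a subset would have $d\geq d_{k-1}$ and hence radius at least $\lceil d_{k-1}/k\rceil>\rho$, as in part (ii)). A set of binary vectors whose affinely independent subsets all have size at most $k-1$ spans an affine subspace of dimension at most $k-2$, and therefore has at most $2^{k-2}$ elements. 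This replaces the factor $k-1$ by $2^{k-2}$ in the local count, and summing over all centers exactly as before gives $|C|\,b^n_\rho\leq 2^{k-2}\cdot 2^n$, that is $|C|\leq 2^{n+k-2}/b^n_{\lceil d_{k-1}/k\rceil-1}$. The delicate point to verify carefully is the affine-dimension bound on $\card(B(y,\rho)\cap C)$ and the reduction from ``no affinely independent $k$-subset'' to ``contained in an affine subspace of dimension $\leq k-2$'', which is where one must argue that the absence of large affinely independent families bounds the affine rank.
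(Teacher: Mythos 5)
Your proposal is correct and follows essentially the same route as the paper: the same double-counting packing argument for (i), the same reduction of (ii) to (i) via the inequality $r(\ux)\geq d(\ux)/k$ of Proposition~\ref{prop compare}(i), and for (iii) the same replacement of the local count $k-1$ by $2^{k-2}$ using the fact that any set of more than $2^{k-2}$ binary vectors contains $k$ affinely independent ones. The only cosmetic differences are that for (iii) the paper excludes affinely independent $k$-tuples from small balls by chaining the triangular-type inequality of Proposition~\ref{prop d}(vii) rather than citing Proposition~\ref{prop compare}(i), and that you spell out (correctly, via affine rank and span) the $2^{k-2}$ bound that the paper states without proof.
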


\begin{proof}
(i) If $r_{k-1}(C)\geq r_{k-1}$ or $\rr_{k-1}(C)\geq r_{k-1}$, from Proposition
  \ref{prop compare} (i) and Proposition \ref{prop list radius} we have, for all $y\in H_n$, $\card(B(y,r_{k-1}-1)\cap C)\leq k-1$.
In order to establish the announced inequality, we count in two ways
the elements of 
\begin{equation*}
E:=\{ (c,y), c \in C, y \in H_n : d(c,y)\leq r_{k-1}-1\}.
\end{equation*}
We have 
\begin{align*}
\card(E)&=\sum_{c\in C} \card\{ y\in H_n  : d(y,c)\leq r_{k-1}-1 \}\\
&=|C|b_{r_{k-1}-1}^n\\
&=\sum_{y\in H_n} \card\{ c\in C  : d(y,c)\leq r_{k-1}-1\}\\
&\leq \card(H_n) (k-1)=(k-1)2^n.
\end{align*}

(ii) 
If $d_{k-1}(C)\geq d_{k-1}$, from Proposition \ref{prop compare} (i) we have 
$r_{k-1}(C)\geq \lceil \frac{d_{k-1}}{k}\rceil $ thus we can apply 
the previous result.

(iii) 
Let $(x_1,\dots,x_k)\in C^k$ be affinely independent and let $y\in H_n$. We have
\begin{align*}
d_{k-1}\leq d(x_1,\dots,x_k) &\leq \frac{1}{k-1}\sum_{1\leq i<j\leq k}
d(x_i,x_j)\\
& \leq \frac{1}{k-1}\sum_{1\leq i<j\leq k} (d(x_i,y)+d(y,x_j))\\
& \leq \sum_{1\leq i\leq k} d(x_i,y).
\end{align*}
Thus for some $i$, $1\leq i\leq k$, $d(x_i,y)\geq \lceil \frac{d_{k-1}}{k}\rceil$.
Since any subset of $H_n$ with at least $2^{k-2}+1$ elements contains $k$ affinely
independent ones,  we have  for all $y\in H_n$, 
\begin{equation*}
\card(B(y, \lceil \frac{d_{k-1}}{k}\rceil-1) \cap C) \leq  2^{k-2}.
\end{equation*}
and we follow the same line as in (i).
\end{proof}

\subsection{Plotkin type bound}
This type of bound is usually derived from the estimate of the average value of
$f$ among $C^{k}$. This average value can be estimated when $f$ can be calculated
from its value at each coordinate, which is the case for $f=d,\rr$. 

We take the following notations: let $C$ be a binary code with $M$
elements; let $w_j$ be the number of ones in the $j$-th column of the $M\times n$ matrix whose rows are the $M$
  elements of $C$. Let $J_k(C)$,
  respectively $J_k^{\operatorname{aff}}(C)$  be the set of
$k$-tuples of pairwise distinct, respectively affinely independent
codewords. We moreover define 
\begin{equation*}
j_k(x):=\begin{cases}
0 &\text{ if } x\leq k-1 \\
\prod_{t=0}^{k-1}(x-t) &\text{ if } x\geq k-1
\end{cases}
\end{equation*}
and
\begin{equation*}
j_k^{\operatorname{aff}}(x):=\begin{cases}
0 &\text{ if } x\leq 2^{k-2} \\
x\prod_{t=0}^{k-2}(x-2^t) &\text{ if } x\geq 2^{k-2}.
\end{cases}
\end{equation*}
We have obviously $|J_k(C)|=j_k(M)$ and 
$|J_k^{\operatorname{aff}}(C)|\geq j_k^{\operatorname{aff}}(M)$, this last
inequality being an equality if $C$ is linear. 
For $x\in \R$, we also denote as is usual $\binom{x}{k}:=j_k(x)/k!$.

\begin{proposition}\label{prop Plotkin} With the above notations:
\begin{enumerate}
\item If $d_{k-1}(C)\geq d_{k-1}$ then 
\begin{equation*}
\delta_{k-1}:=\frac{d_{k-1}}{n}\leq 1-2\frac{\binom{M/2}{k}}{\binom{M}{k}}.
\end{equation*}
\item If $C$ is linear or if $k=3$, and if $\dkaff(C)\geq d_{k-1}$, we have 
\begin{equation*}
\delta_{k-1}:=\frac{d_{k-1}}{n}\leq \big(1-\frac{1}{2^{k-1}}\big)\frac{M}{M-1}.
\end{equation*}
\item If $\rr_{k-1}(C)\geq r_{k-1}$ then 
\begin{equation*}\label{b2}
\rho_{k-1}:=\frac{r_{k-1}}{n}\leq \frac{\sum_{i=1}^{k-1} \frac{1}{k} \binom{M/2}{i}\binom{M/2}{k-i}\min(i,k-i)}{\binom{M}{k}}
\end{equation*}
\end{enumerate}
\end{proposition}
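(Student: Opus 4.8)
The plan is to compute, for each of the three claims, the average value of the relevant pseudo-distance $f(x_1,\dots,x_k)$ over an appropriate collection of $k$-tuples of codewords, and then to observe that the minimum cannot exceed the average. Since the constraint $f_{k-1}(C)\geq m$ forces every relevant $k$-tuple to have $f$-value at least $m$, this minimum-versus-average comparison immediately yields the stated inequality. The crucial input throughout is that both $d$ and $\rr$ are \emph{additive across coordinates}, i.e. $d(\ux)=\sum_{j=1}^n d((x_1)_j,\dots,(x_k)_j)$ from \eqref{e1} and likewise $\rr(\ux)=\sum_{j=1}^n \rr((x_1)_j,\dots,(x_k)_j)$ from Proposition \ref{prop m}. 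This reduces every computation to a single column, where the $f$-value depends only on the number $a$ of ones in that column among the $k$ chosen codewords.

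For claim (i), I would sum $d(\ux)$ over all ordered (or unordered) $k$-tuples of pairwise distinct codewords. By additivity, it suffices to handle column $j$: among the $w_j$ ones and $M-w_j$ zeros, a $k$-subset contributes $1$ to $d$ unless all its entries agree, i.e. unless all $k$ chosen codewords have a $0$ or all have a $1$ in that column. Hence the number of $k$-subsets with zero contribution in column $j$ is $\binom{w_j}{k}+\binom{M-w_j}{k}$, and the total contribution of column $j$ is $\binom{M}{k}-\binom{w_j}{k}-\binom{M-w_j}{k}$. Summing over $j$ and dividing by $\binom{M}{k}$, the average of $d$ is $n - \frac{1}{\binom{M}{k}}\sum_{j}\big(\binom{w_j}{k}+\binom{M-w_j}{k}\big)$. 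The main technical point is that $w\mapsto \binom{w}{k}+\binom{M-w}{k}$ is convex in $w$ and is minimized at $w=M/2$, so each summand is at least $2\binom{M/2}{k}$; this is where the displayed bound $2\binom{M/2}{k}/\binom{M}{k}$ comes from. Since $d_{k-1}\leq \text{average of }d$, dividing by $n$ gives (i).

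For claim (iii), the identical strategy applies with $\rr$ in place of $d$. In a single column with $a$ ones among the $k$ chosen codewords, $\rr$ contributes $\frac{1}{k}\min(a,k-a)$ by Proposition \ref{prop m}(iii). Choosing $i$ of the $k$ codewords from the $w_j$ ones and $k-i$ from the $M-w_j$ zeros produces exactly $a=i$, so the average column contribution is $\frac{1}{\binom{M}{k}}\sum_{i}\frac{1}{k}\binom{w_j}{i}\binom{M-w_j}{k-i}\min(i,k-i)$, and again convexity (minimization at $w_j=M/2$) yields the stated numerator. The minimum-over-average inequality and division by $n$ finish (iii).

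For claim (ii) the object is $\daff$, so the sum must range over \emph{affinely independent} $k$-tuples, of which there are $|J_k^{\operatorname{aff}}(C)|$. Here the single-coordinate reduction still holds via \eqref{e1}, but the column combinatorics is subtler because affine independence is a global constraint, not a per-column one. The clean route, which I expect to be the main obstacle, is to pass to the linear case (or $k=3$) where $J_k^{\operatorname{aff}}(C)$ is exactly a transversal of affine subspaces: for a linear code the affinely independent $k$-tuples correspond to ordered bases of $(k-1)$-dimensional affine subspaces, and for a fixed such subspace $D$ the distribution of column patterns is governed by the fact that a nonzero linear functional vanishes on exactly half of $D$. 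This forces, in each coordinate, the fraction of subspaces on which all $k$ points agree to be exactly $2^{-(k-1)}$, giving an average of $d$ equal to $n(1-2^{-(k-1)})$ across the affine subspaces; the factor $M/(M-1)$ then accounts for passing from the subspace count to the codeword count via $|J_k^{\operatorname{aff}}(C)|=j_k^{\operatorname{aff}}(M)$. The delicate bookkeeping is precisely this translation between ``per affine subspace'' averaging and ``per $k$-tuple'' averaging, together with verifying the claim separately in the elementary $k=3$ case where affine independence coincides with pairwise distinctness plus non-collinearity.
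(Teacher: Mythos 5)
Your treatment of (i) is correct and is essentially the paper's own proof: the paper counts, per column, the ordered $k$-tuples on which that column contributes $1$ to $d$, namely $\sum_{i=1}^{k-1}k!\binom{w_j}{i}\binom{M-w_j}{k-i}$, and maximizes this at $w_j=M/2$ by concavity; your complementary count $\binom{M}{k}-\binom{w_j}{k}-\binom{M-w_j}{k}$ together with convexity of $w\mapsto\binom{w}{k}+\binom{M-w}{k}$ is the same argument after a Vandermonde identity. In (iii), however, your extremality statement points the wrong way: you need the column contribution $g(w):=\sum_{i=1}^{k-1}\binom{w}{i}\binom{M-w}{k-i}\min(i,k-i)$ to be \emph{maximized} at $w=M/2$, not minimized; minimization at $M/2$ would bound the average from below, which is useless for an upper bound on $\rr_{k-1}$. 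The fix is the paper's own observation from (i): pairing the terms $i$ and $k-i$, which carry the same weight $\min(i,k-i)$, each bracket $\binom{w}{i}\binom{M-w}{k-i}+\binom{w}{k-i}\binom{M-w}{i}$ is concave and invariant under $w\mapsto M-w$, hence $g$ is maximized at $M/2$.

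Part (ii) is where the genuine gap lies. Your central claim that, for linear $C$, the fraction of $(k-1)$-dimensional affine subspaces all of whose points agree on a fixed coordinate is ``exactly $2^{-(k-1)}$'' is false, and with it your explanation of where the factor $M/(M-1)$ comes from. There is in fact no correction factor between per-subspace and per-tuple averaging: $d$ is constant on all $k$-tuples spanning the same $(k-1)$-dimensional affine subspace (Proposition \ref{prop d} (iv)), and every such subspace contains the same number $j_k^{\operatorname{aff}}(2^{k-1})$ of affinely independent $k$-tuples, so the two averages coincide exactly. The correct count, which is the paper's, goes as follows: for linear $C$ each coordinate functional is constant or balanced on $C$; when balanced, the two halves $C\cap\{x:x_j=\epsilon\}$ are a linear subcode and a coset of it, each of size $M/2$, hence each contains exactly $j_k^{\operatorname{aff}}(M/2)$ affinely independent $k$-tuples. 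Thus a column contributes $0$ on exactly $2j_k^{\operatorname{aff}}(M/2)$ tuples, giving $j_k^{\operatorname{aff}}(M)\,d_{k-1}\leq n\bigl(j_k^{\operatorname{aff}}(M)-2j_k^{\operatorname{aff}}(M/2)\bigr)$, and the ratio $2j_k^{\operatorname{aff}}(M/2)/j_k^{\operatorname{aff}}(M)=2^{-(k-1)}(M-2^{k-1})/(M-1)$ --- not $2^{-(k-1)}$ --- is precisely what produces $\bigl(1-\frac{1}{2^{k-1}}\bigr)\frac{M}{M-1}$. Finally, the $k=3$ case needs no separate affine analysis: over $\F_2$ any three pairwise distinct points are affinely independent (a line has only two points), so $d_2^{\operatorname{aff}}(C)=d_2(C)$, and the $k=3$ instance of the bound in (i) simplifies algebraically to $\frac{3}{4}\cdot\frac{M}{M-1}$, which is exactly (ii); this is how the paper disposes of that case.
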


\begin{proof}
(i) For the generalized Hamming distance, we have
\begin{align*}
 \sum_{\underline{x}\in J_k(C)}
d(\underline{x})
&= \sum_{\underline{x}\in J_k(C)}\big(\sum_{j=1}^n d((x_{1})_j,\dots, (x_k)_j)\big)\\
&= \sum_{j=1}^n\big(\sum_{\underline{x}\in J_k(C)} d((x_{1})_j,\dots, (x_k)_j)\big)\\
&= \sum_{j=1}^n \sum_{i=1}^{k-1} k! \binom{w_j}{i}\binom{M-w_j}{k-i}.
\end{align*}
The function $x\to
\binom{x}{i}\binom{M-x}{k-i}+\binom{x}{k-i}\binom{M-x}{i}$ is concave
and invariant by $x\to M-x$ thus it takes its maximum at $x=M/2$. We
derive the inequalities:
\begin{align*}
j_k(M) d_{k-1}\leq d(\underline{x})\leq n \sum_{i=1}^{k-1} k! \binom{M/2}{i}\binom{M/2}{k-i}=nk!\Big(\binom{M}{k}-2\binom{M/2}{k}\Big).
\end{align*}

(ii) In the special case $k=3$, we obtain from (i) the desired
inequality. In the case $C$ linear, we observe that
$w_j=0,M,M/2$ and that $d((x_{1})_j,\dots, (x_k)_j)$ is non zero only
if $w_j=M/2$ and $x_{1},\dots, x_k$ do not all belong to $\{x\in C:
x_j=0\}$ or to $\{x\in C: x_j=1\}$, which have $M/2$ elements. Thus
\begin{align*}
j_k^{\operatorname{aff}}(M) d_{k-1}\leq n\big(j_k^{\operatorname{aff}}(M) -2j_k^{\operatorname{aff}}(M/2) \big)
\end{align*}
hence the announced inequality.

(iii) The result for $\rr$ is derived similarly to the result (i) in
the $d$ case.
\end{proof}

\begin{remark} The upper bounds established in Proposition \ref{prop
    Plotkin} can easily be turned into upper bounds for $M=|C|$. Indeed,
  if $\phi_{k-1}:=f_{k-1}/n\leq A(M)/B(M)$ where $A$ and $B$ are polynomials of the
  same degree, with respective leading coefficients $\alpha$ and
  $\beta$, with $B(M)>0$, then, if $\phi_{k-1}\geq \alpha/\beta$, $M$ is upper bounded
  by the largest zero of the polynomial $\phi_{k-1}B(M)-A(M)$. The
  bound obtained this way holds for $\delta_{k-1}\geq 1-1/2^{k-1}$ and 
$\rho_{k-1}\geq 1/2-\binom{k-1}{\lfloor(k-1)/2\rfloor}/2^k$.
\end{remark}

\subsection{The Elias-Bassalygo technique and constant weight codes}

We recall  that $A_{k-1}(n,f,m)$ denotes the maximal number of elements that a code
$C\subset H_n$ can have under the condition $f_{k-1}(C)\geq m$;
analogously let $A_{k-1}(n,w,f,m)$ be the maximum among the  codes with constant
weight $w$. With a standard argument, the following inequality holds:
\begin{equation}\label{Elias}
\frac{A_{k-1}(n,f,m)}{\card(H_n)}\leq \frac{A_{k-1}(n,w,f,m)}{\card(J_{n}^w)}
\end{equation}
where $J_n^w$ is the set of the $\binom{n}{w}$ binary words of length $n$ and weight $w$.
This so-called {\em Elias  Bassalygo technique} is expected to improve the
bounds on $H_n$, if similar bounds are established on the Johnson
spaces $J_n^w$. Note that the value of $w$ on the right hand side can
be chosen freely. This line was followed in \cite{CLZ} for the
generalized Hamming distance, and required moreover to extend the
methods to non linear codes. In view of \eqref{Elias}, we work
out  Plotkin type bounds for constant weight codes:

\begin{proposition}
Let $C\subset J_n^w$ have $M$ elements and let $\omega:=w/n$.
\begin{enumerate}
\item If $d_{k-1}(C)\geq d_{k-1}$ then 
\begin{equation*}
\delta_{k-1}:=\frac{d_{k-1}}{n}\leq 1-\frac{\binom{M\omega}{k}+\binom{M(1-\omega)}{k}}{\binom{M}{k}}.
\end{equation*}
\item If $\dkaff(C)\geq d_{k-1}$, we have 
\begin{equation*}
\delta_{k-1}:=\frac{d_{k-1}}{n}\leq \frac{j_k(M)}{j_k^{\operatorname{aff}}(M)}\Big(1-\frac{\binom{M\omega}{k}+\binom{M(1-\omega)}{k}}{\binom{M}{k}}\Big).
\end{equation*}

\item If $\rr_{k-1}(C)\geq r_{k-1}$ then 
\begin{equation*}\label{bw2}
\rho_{k-1}:=\frac{r_{k-1}}{n}\leq \frac{\sum_{i=1}^{k-1} \frac{1}{k} \binom{M\omega}{i}\binom{M(1-\omega)}{k-i}\min(i,k-i)}{\binom{M}{k}}.
\end{equation*}
\end{enumerate}

\end{proposition}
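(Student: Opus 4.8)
The plan is to mirror the proof of Proposition \ref{prop Plotkin}; the only new ingredient is the constraint that constant weight imposes on the column weights. Write the $M$ codewords as the rows of an $M\times n$ matrix and let $w_j$ be the number of ones in its $j$-th column. Since every row has weight $w$, the total number of ones is $Mw$, so
\begin{equation*}
\sum_{j=1}^n w_j = Mw = Mn\omega,
\end{equation*}
that is, the average column weight is exactly $M\omega$. This single identity is what replaces, in the unrestricted case, the free maximisation at $x=M/2$: here the columns are no longer free, and an averaging (Jensen) argument pins the extremum at the forced average $M\omega$.

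First I would treat (i). Using $d(\ux)=\sum_{j=1}^n d((x_1)_j,\dots,(x_k)_j)$ and counting, in each column, the ordered $k$-tuples of distinct codewords having exactly $i$ ones there, one obtains
\begin{equation*}
\sum_{\ux\in J_k(C)} d(\ux)=\sum_{j=1}^n \sum_{i=1}^{k-1} k!\binom{w_j}{i}\binom{M-w_j}{k-i}=k!\sum_{j=1}^n\Big(\binom{M}{k}-\binom{w_j}{k}-\binom{M-w_j}{k}\Big),
\end{equation*}
the last equality by Vandermonde. The per-column function $x\mapsto \binom{M}{k}-\binom{x}{k}-\binom{M-x}{k}$ is concave, so Jensen's inequality together with $\tfrac1n\sum_j w_j=M\omega$ bounds the sum by $n$ times its value at $M\omega$. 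Comparing with the lower bound $\sum_{\ux\in J_k(C)} d(\ux)\geq d_{k-1}\,j_k(M)=d_{k-1}k!\binom{M}{k}$ and dividing yields (i).

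For (ii) the same upper bound on $\sum_{\ux\in J_k(C)} d(\ux)$ is used, but the lower bound is restricted to affinely independent tuples: since $J_k^{\operatorname{aff}}(C)\subset J_k(C)$ and $d\geq 0$ one has $\sum_{\ux\in J_k^{\operatorname{aff}}(C)} d(\ux)\leq \sum_{\ux\in J_k(C)} d(\ux)$, while $\sum_{\ux\in J_k^{\operatorname{aff}}(C)} d(\ux)\geq d_{k-1}\,|J_k^{\operatorname{aff}}(C)|\geq d_{k-1}\,j_k^{\operatorname{aff}}(M)$; dividing by $j_k^{\operatorname{aff}}(M)$ and writing $k!\binom{M}{k}=j_k(M)$ produces the factor $j_k(M)/j_k^{\operatorname{aff}}(M)$. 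For (iii) I use $\rr(\ux)=\sum_{j=1}^n \rr((x_1)_j,\dots,(x_k)_j)$, a single column with $i$ ones contributing $\tfrac1k\min(i,k-i)$, whence
\begin{equation*}
\sum_{\ux\in J_k(C)} \rr(\ux)=\frac{k!}{k}\sum_{j=1}^n\sum_{i=1}^{k-1}\min(i,k-i)\binom{w_j}{i}\binom{M-w_j}{k-i},
\end{equation*}
and dividing the value at $M\omega$ by $j_k(M)=k!\binom{M}{k}$ gives the stated bound after the same comparison with $\sum_{\ux}\rr(\ux)\geq r_{k-1}\,j_k(M)$.

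The coordinatewise counting is routine, identical in spirit to Proposition \ref{prop Plotkin}. The one point needing care — and the main obstacle — is the concavity of the per-column contribution for $\rr$, which is what legitimises the Jensen step. I would settle it by pairing the index $i$ with $k-i$, exhibiting that function as a nonnegative combination, with coefficients $\min(i,k-i)\geq 0$, of the symmetric functions $h_i(x)=\binom{x}{i}\binom{M-x}{k-i}+\binom{x}{k-i}\binom{M-x}{i}$ already shown concave in the proof of Proposition \ref{prop Plotkin}(i); a nonnegative combination of concave functions being concave, the bound then follows exactly as in (i), with the average column weight $M\omega$ in place of $M/2$.
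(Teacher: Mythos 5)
Your proof is correct and takes essentially the same route as the paper: the paper's own proof likewise reduces to the argument of Proposition~\ref{prop Plotkin}, with the maximisation at $x=M/2$ replaced by the Jensen-type inequality $\sum_{j=1}^n g(w_j)\leq n\,g(Mw/n)$ coming from the constant-weight constraint $\sum_{j=1}^n w_j=Mw$, applied to the same concave, column-wise counting functions. The details you add beyond the paper's terse sketch --- the Vandermonde rewriting in (i), bounding the sum over $J_k^{\operatorname{aff}}(C)$ by the sum over $J_k(C)$ in (ii), and the pairing $i\leftrightarrow k-i$ showing the $\rr$ column function is a nonnegative combination of the concave $h_i$ --- are correct fillings-in of steps the paper leaves implicit.
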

\begin{proof} For $d$ and $\rr$, we follow the same line as for the
  proof
of Proposition \ref{prop Plotkin}. There we applied the inequality 
$\sum_{j=1}^ng(w_j)\leq ng(M/2)$ for relevant  functions $g$, being
concave and invariant by $x\to M-x$.
Since $C\subset J_n^w$, we have
$\sum_{j=1}^n w_j=Mw$, so we can instead use the stronger inequality
$\sum_{j=1}^n g(w_j)\leq ng(Mw/n)$.  
\end{proof}

\section{The SDP bound for $d_2$, $r_2$, $\rr_2$.}\label{SDP}

The method developed in \cite{S} can be used to derive upper
bounds for the cardinality of a binary code $C$ with given $d_2(C)$
(respectively $r_2(C)$, $\rr_2(C)$).
Recall that $d_2(C)\geq d$ if and only if, for all $(x,y,z)\in C^3$ 
such that $x\neq y$, $y\neq z$, 
$d(x,y)+d(y,z)+d(z,x)\geq 2d$ , $z\neq x$ (respectively 
$r_2(C)\geq r$ iff
$\max(\lceil \frac{d(x,y)}{2} \rceil, \lceil \frac{d(y,z)}{2}
\rceil, \lceil \frac{d(z,x)}{2} \rceil)\geq r$ and 
$\rr_2(C)\geq \rr$ iff
$d(x,y)+d(y,z)+d(z,x)\geq 6\rr$ for all $(x,y,z)\in C^3$).

The SDP constraints at work in \cite{S} are exactly SDP constraints on
triples of points. In order to describe these constraints
we adopt the group theoretic point of view of \cite{Ba}, \cite{V1},
\cite{V2}.

Let $X:=H_n$ and, for all $k:=0\dots n$, the so-called Johnson spaces $X_k:=\{x, x\in X : wt(x)=k\}$.
We consider the action of the symmetric group $S_n$ on $H_n$. The
Johnson spaces $X_k$ are exactly the orbits of this action. Now we
consider the decomposition of the functional space $L^2(X)=\R^{X}$ of
real valued functions on $X$
under the action of $S_n$. The space $\R^X$ is endowed
with the $S_n$-invariant scalar product

\begin{equation*}
(f, g)=\frac{1}{|X|} \sum_{x\in X} f(x)g(x).
\end{equation*}

We have the obvious decomposition into pairwise orthogonal $S_n$-invariant subspaces:
\begin{equation*}
\R^X=\R^{X_1}\perp \R^{X_1}\perp\dots\perp \R^{X_n}.
\end{equation*}
The decomposition of $\R^{X_k}$ into $S_n$-irreducible subspaces is
described in \cite{Del2}. We have 
\begin{equation*}
\R^{X_k}=H_{0,k}\perp H_{1,k}\perp\dots \perp H_{\min(k,n-k),k}
\end{equation*}
where the $H_{i,k}$ are pairwise  isomorphic for fixed $i$ and 
pairwise non  isomorphic for fixed $k$.
The picture looks like:

\begin{equation*}
\begin{array}{cccccccc}
\R^{X}=&\R^{X_1} \perp &\R^{X_1}\perp &\dots  &\perp \R^{X_{\lfloor
    \frac{n}{2}\rfloor}}\perp  &\dots &\perp \R^{X_{n-1}}&\perp
\R^{X_n}\\
&&&&&&&\\
       & H_{0,0} \perp &H_{0,1} \perp      &\dots  &\perp H_{0,\lfloor         \frac{n}{2}\rfloor}\perp  &\dots &\perp H_{0,n-1}&\perp H_{0,n}\\
       &               & H_{1,1}\perp       &\dots  &             &     & \perp H_{1,n-1}&\\
&&&\ddots &\vdots &&&\\
&&&& H_{\lfloor \frac{n}{2}\rfloor,\lfloor \frac{n}{2}\rfloor}
\end{array}
\end{equation*}
where the columns represent the decomposition of $\R^{X_k}$ and 
the rows  the isotypic components of $\R^X$, with multiplicity $n-2k+1$, i.e. 
we have for  $0\leq k\leq \lfloor \frac{n}{2}\rfloor$, 
$$H_{k,k}\perp H_{k,k+1}\perp\dots\perp H_{k,n-k}\simeq H_{k,k}^{n-2k+1}.$$

To each of these isotypic components, indexed by $k$, for $0\leq k\leq
\lfloor \frac{n}{2}\rfloor$,   we associate a matrix 
$E_k$ 
of size $n-2k+1$ as explained in \cite{BV1}, indexed with
$s,t$ subject to  $k\leq s,t\leq n-k$, in the following way:
Let $(e_{k,k,1},e_{k,k,2},\dots, e_{k,k,h_k})$ be an orthonormal basis of
$H_{k,k}$ and let $e_{k,s,j}=\psi_{k,s}(e_{k,k,j})$. The application
$\psi_{k,s}$ is defined by: 

\begin{equation*}
\begin{array}{llll}
\psi_{k,s}: & \R^{X_k} & \to &\R^{X_k}\\
& f &\mapsto  &\psi_{k,s}(f) : 
\psi_{k,s}(f)(y)=\sum_{\substack{wt(x)=k\\x\subset y}}f(x)
\end{array}
\end{equation*}

and has the property to send and orthonormal basis of
$H_{k,k}$ to an orthogonal basis of $H_{k,s}$, the elements of this basis
having constant square norm equal to
$\binom{n-2k}{s-k}$. The $(s,t)$ coefficient of $E_k$ is
defined by:

\begin{equation*}
E_{k,s,t}(x,y)=
\frac{1}{h_k}\sum_{j=1}^{h_k} e_{k,s,j}(x)e_{k,t,j}(y).
\end{equation*}

From \cite{BV1}, $E_{k,s,t}(x,y)=E_{k,s,t}(gx,gy)$ for all $g\in S_n$. 
Thus for $k\leq s \leq t\leq n-k $, we can define $P_{k,s,t}$ by
 $E_{k,s,t}(x,y)=P_{k,s,t}(s-|x\cap y|)$. It turns out that 
these
 $P_{k,s,t}$ 
express in terms of Hahn polynomials.

The Hahn polynomials associated to the parameters
$n,s,t$ with $0\leq s\leq t\leq n$  are the polynomials 
$Q_k(n,s,t; x)$ with $0\leq k\leq \min(s,n-t)$ uniquely determined by
the properties:
\begin{enumerate}
\item $Q_k$ has degree $k$ in the variable $x$
\item They are orthogonal polynomials for the weights
\begin{equation*}
0\leq i\leq s \quad w(n,s,t; i)=\binom{s}{i}\binom{n-s}{t-s+i}
\end{equation*}
\item $Q_k(0)=1$
\end{enumerate}

The combinatorial meaning of the above weights is the following:

\begin{lemma}
Given $x\in X_k$, the number of elements $y\in X_t$ such that $|x\cap
y|=s-i$ is equal to $w(n,s,t;i)$.
\end{lemma}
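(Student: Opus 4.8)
The plan is to read this lemma as the elementary counting identity underlying the orthogonality of the Hahn polynomials $Q_k(n,s,t;\cdot)$. In the relation $E_{k,s,t}(x,y)=P_{k,s,t}(s-|x\cap y|)$, the basis functions $e_{k,s,j}=\psi_{k,s}(e_{k,k,j})$ lie in $\R^{X_s}$ and the $e_{k,t,j}$ in $\R^{X_t}$, so the first argument $x$ ranges over $X_s$ and the second $y$ over $X_t$, while the argument fed to the polynomial is the integer $s-|x\cap y|$. Hence the weight $w(n,s,t;i)=\binom{s}{i}\binom{n-s}{t-s+i}$ is meant to count, for a fixed word $x$ of weight $s$, the number of $y\in X_t$ at which this argument equals $i$, i.e. those with $|x\cap y|=s-i$. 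I would therefore read the hypothesis ``$x\in X_k$'' as ``$x\in X_s$'' (the integer $k$ being already in use as the degree of $Q_k$ and the isotypic index), and I would flag this explicitly rather than pass over it in silence. The correction is forced: for a word $x$ of weight $k$ the same count gives $\binom{k}{s-i}\binom{n-k}{t-s+i}$, which agrees with $w(n,s,t;i)$ only when $k=s$.

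With $x$ fixed of weight $s$, regarded as a size-$s$ subset of the $n$ coordinates, I would perform the count by a single independent choice. A word $y\in X_t$ with $|x\cap y|=s-i$ is determined by its trace inside $x$ and outside $x$: the intersection $x\cap y$ is an arbitrary $(s-i)$-subset of the $s$ coordinates of $x$, while $y\setminus x$ is an arbitrary subset of size $t-(s-i)=t-s+i$ among the $n-s$ coordinates outside $x$. These two choices are independent and jointly reconstruct $y$, so the number of admissible $y$ is $\binom{s}{s-i}\binom{n-s}{t-s+i}$. Using the symmetry $\binom{s}{s-i}=\binom{s}{i}$, this is exactly $w(n,s,t;i)$.

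I would then record the consistency of the ranges, which doubles as a check that nothing is lost at the extremes. The quantity $i=s-|x\cap y|$ runs over $0\le i\le s$, matching the support of $w(n,s,t;i)$; for such $i$ one has $t-s+i\ge 0$ since $t\ge s$, and $t-s+i\le n-s$ precisely when $i\le n-t$. Outside this last range no $y\in X_t$ can meet $x$ in $s-i$ coordinates, which is correctly reflected by the vanishing $\binom{n-s}{t-s+i}=0$, so the identity holds for every $i$ under the usual convention that out-of-range binomial coefficients are zero.

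The main obstacle is not the enumeration, which is a one-line bijection, but the notational reconciliation: pinning down that the word being fixed has weight $s$ (the first index of $E_{k,s,t}$, where $e_{k,s,j}\in\R^{X_s}$) rather than weight $k$, and that the polynomial variable is $s-|x\cap y|$. Once those identifications are made explicit the equality is immediate, and it is exactly the ingredient needed to rewrite the orthogonality sum for the $Q_k(n,s,t;\cdot)$ as a sum over $y\in X_t$ weighted by $w(n,s,t;i)$.
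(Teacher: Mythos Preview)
Your proof is correct, and in fact the paper states this lemma without proof, so there is nothing to compare against. Your reading of the hypothesis as ``$x\in X_s$'' rather than ``$x\in X_k$'' is exactly right: the symbol $k$ is already reserved for the isotypic index and the degree of $Q_k$, and as you observe the count for a word of weight $k$ would be $\binom{k}{s-i}\binom{n-k}{t-s+i}$, which matches $w(n,s,t;i)$ only when $k=s$. The counting argument itself is the standard one and is complete as written.
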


Finaly we have:

\begin{proposition}\label{p1} If $k\leq s\leq t\leq n-k$, $wt(x)=s$, $wt(y)=t$,
\begin{equation*} 
E_{k,s,t}(x,y)= |X| \frac{\binom{t-k}{s-k}\binom{n-2k}{t-k}}{\binom{n}{t}\binom{t}{s}}Q_k(n,s,t; s-|x\cap y|)
\end{equation*}
If $wt(x)\neq s$ or $wt(y)\neq t$, $E_{k,s,t}(x,y)=0$.
\end{proposition}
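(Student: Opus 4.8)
The plan is to identify the function $P_{k,s,t}$ with the Hahn polynomial $Q_k(n,s,t;\cdot)$ up to a multiplicative constant, by verifying the three properties that characterize $Q_k$, and then to pin the constant down using $Q_k(0)=1$. The vanishing statement is immediate: since $e_{k,s,j}\in H_{k,s}\subset\R^{X_s}$ and $e_{k,t,j}\in H_{k,t}\subset\R^{X_t}$, the numbers $e_{k,s,j}(x)$ and $e_{k,t,j}(y)$ are zero unless $wt(x)=s$ and $wt(y)=t$, so $E_{k,s,t}(x,y)=0$ otherwise.

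For the main formula I would first expand $E_{k,s,t}$ through $e_{k,s,j}=\psi_{k,s}(e_{k,k,j})$. Writing $\psi_{k,s}(e_{k,k,j})(x)=\sum_{u\subset x,\,wt(u)=k}e_{k,k,j}(u)$ and interchanging sums gives
$$E_{k,s,t}(x,y)=\sum_{\substack{u\subset x\\ wt(u)=k}}\sum_{\substack{v\subset y\\ wt(v)=k}}Z_k(|u\cap v|),\qquad Z_k(|u\cap v|):=\frac{1}{h_k}\sum_{j}e_{k,k,j}(u)e_{k,k,j}(v),$$
where $Z_k=E_{k,k,k}$ depends only on $|u\cap v|$ by $S_n$-invariance. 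Since $u,v$ both have weight $k\le\lfloor n/2\rfloor$, the integer $|u\cap v|$ takes only the $k+1$ values $0,\dots,k$, so $Z_k$ agrees with a polynomial of degree $\le k$; writing $Z_k(m)=\sum_{r=0}^{k}a_r\binom{m}{r}$ and using the elementary count $\sum_{u\subset x,\,v\subset y}\binom{|u\cap v|}{r}=\binom{p}{r}\binom{s-r}{k-r}\binom{t-r}{k-r}$, with $p:=|x\cap y|$, shows that $E_{k,s,t}(x,y)$ is a polynomial of degree $\le k$ in $p$, i.e. $\deg P_{k,s,t}\le k$.

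The orthogonality I would obtain by evaluating $\sum_{y\in X_t}E_{k,s,t}(x,y)E_{k',s,t}(x,y)$ in two ways for fixed $x\in X_s$. Grouping the $y\in X_t$ by $|x\cap y|=s-i$ and invoking the Lemma (which counts exactly $w(n,s,t;i)$ such $y$) turns it into $\sum_i w(n,s,t;i)P_{k,s,t}(i)P_{k',s,t}(i)$. On the other hand, inserting the definition and summing over $y$ first yields $\frac{|X|}{h_kh_{k'}}\sum_{j,j'}e_{k,s,j}(x)e_{k',s,j'}(x)(e_{k,t,j},e_{k',t,j'})$; the inner products vanish for $k\ne k'$ because $H_{k,t}$ and $H_{k',t}$ are non-isomorphic isotypic components of $\R^{X_t}$, and are diagonal with value $\binom{n-2k}{t-k}$ for $k=k'$ (the constant square norm of the $e_{k,t,j}$). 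Hence for $k\ne k'$ the sum is zero, so the $P_{k,s,t}$ are orthogonal for the weights $w(n,s,t;\cdot)$. Together with $\deg P_{k,s,t}\le k$ and the presence of exactly $\min(s,n-t)+1$ nonzero members, the usual staircase argument forces $\deg P_{k,s,t}=k$, and by uniqueness of orthogonal polynomials $P_{k,s,t}=c_k\,Q_k(n,s,t;\cdot)$ for some constant $c_k$.

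The remaining, and genuinely computational, step is to determine $c_k$. Since $Q_k(0)=1$ we have $c_k=P_{k,s,t}(0)=E_{k,s,t}(x,y)$ evaluated at any pair with $x\subset y$ (so $|x\cap y|=s$, $i=0$), which exists because $s\le t$. I would compute this value either directly from the expansion above at $p=s$, or more cleanly from the $k=k'$ case of the double count, namely $\sum_i w(n,s,t;i)P_{k,s,t}(i)^2=\frac{|X|\binom{n-2k}{t-k}}{h_k}E_{k,s,s}(x,x)$, combined with the known squared norm $\sum_i w(n,s,t;i)Q_k(i)^2$ of the Hahn polynomials and the diagonal value of the reproducing kernel $E_{k,s,s}$. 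Carrying the binomial factors $\binom{t-k}{s-k}$ (which enters through $\psi_{s,t}\circ\psi_{k,s}=\binom{t-k}{s-k}\psi_{k,t}$), $\binom{n-2k}{t-k}$, $\binom{n}{t}$ and $\binom{t}{s}$ correctly through this computation is the delicate part; once assembled they produce $c_k=|X|\binom{t-k}{s-k}\binom{n-2k}{t-k}/\bigl(\binom{n}{t}\binom{t}{s}\bigr)$, which completes the proof.
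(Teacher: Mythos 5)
A preliminary remark: the paper offers no proof of this proposition at all --- it is stated as imported from \cite{BV1} and \cite{Del2} --- so your argument must stand on its own; there is no in-paper proof to match it against. Your strategy is the natural one (essentially Delsarte's): show that $P_{k,s,t}$ agrees with a polynomial of degree at most $k$, show that the family $(P_{k,s,t})_k$ is orthogonal for the weights $w(n,s,t;\cdot)$, then identify $P_{k,s,t}=c_kQ_k(n,s,t;\cdot)$ by uniqueness of orthogonal polynomials and fix $c_k$ via $Q_k(0)=1$. The two structural steps are executed correctly: the vanishing statement; the expansion through $\psi_{k,s}$ together with the count $\sum_{u,v}\binom{|u\cap v|}{r}=\binom{p}{r}\binom{s-r}{k-r}\binom{t-r}{k-r}$, where $p=|x\cap y|$, which does bound the degree by $k$; and the double count of $\sum_{y\in X_t}E_{k,s,t}(x,y)E_{k',s,t}(x,y)$ using orthogonality of the non-isomorphic constituents $H_{k,t}$, $H_{k',t}$ and the constant square norm $\binom{n-2k}{t-k}$. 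One assertion you use but do not justify is that each $P_{k,s,t}$ is nonzero (your staircase needs this); it does follow from your own $k=k'$ identity once one knows $E_{k,s,s}(x,x)=|X|\binom{n-2k}{s-k}/\binom{n}{s}>0$, which comes from transitivity of $S_n$ on $X_s$ and summing over $x\in X_s$.

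The genuine gap is the constant $c_k$, which is precisely where the factor $|X|\binom{t-k}{s-k}\binom{n-2k}{t-k}/\bigl(\binom{n}{t}\binom{t}{s}\bigr)$ --- the actual content of the displayed formula, and the ingredient later used to compute the matrices $T_k(a,b,c)$ --- has to come from. You correctly reduce it to evaluating $E_{k,s,t}(x,y)$ at a pair $x\subset y$, but neither of your two proposed computations is carried out, and neither works as described. The route ``directly from the expansion above at $p=s$'' needs the coefficients $a_r$ of $Z_k=E_{k,k,k}$, i.e.\ the case $s=t=k$ of the very proposition being proved; that case can be bootstrapped (by transitivity $E_{k,k,k}(x,x)$ is constant on $X_k$ and $\sum_{x\in X_k}E_{k,k,k}(x,x)=|X|$, so $P_{k,k,k}(0)=|X|/\binom{n}{k}$), but you would then still need the explicit expansion of $Q_k(n,k,k;\cdot)$ in the binomial basis and a further nontrivial summation, none of which appears. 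The norm route only yields $c_k^2$ from $\sum_i w(n,s,t;i)P_{k,s,t}(i)^2=c_k^2\sum_i w(n,s,t;i)Q_k(i)^2$: the sign of $c_k$ is left undetermined (nothing in your argument shows $E_{k,s,t}(x,y)>0$ when $x\subset y$), and the ``known squared norm of the Hahn polynomials'' is invoked but neither quoted nor derived. So the proposal proves the proposition up to a scalar and then asserts the scalar; since the positivity and the relative normalization of the $(s,t)$ entries are exactly what make $E_k$ the correct block kernel for the SDP, this last step cannot be waved through.
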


By the construction, the matrices $E_k$ satisfy  the semidefinite
positivity  properties:

\begin{theorem}
For all $k$, $0\leq k\leq \lfloor \frac{n}{2}\rfloor$, for all
$C\subset H_n$,
\begin{equation*}
\sum_{(c,c')\in C^2} E_k(c,c')\succeq 0.
\end{equation*}
\end{theorem}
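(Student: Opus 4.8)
The plan is to exploit the explicit factorization built into each entry of $E_k$: by definition $E_{k,s,t}(x,y)=\frac{1}{h_k}\sum_{j=1}^{h_k} e_{k,s,j}(x)e_{k,t,j}(y)$ separates the dependence on $x$ from the dependence on $y$, so that after summing over all ordered pairs $(c,c')\in C^2$ the resulting matrix becomes a manifest Gram matrix. Positive semidefiniteness is then automatic.

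First I would fix $k$ with $0\le k\le \lfloor n/2\rfloor$ and compute the $(s,t)$-entry of $M:=\sum_{(c,c')\in C^2}E_k(c,c')$, where $k\le s,t\le n-k$. Interchanging the finite sum over $C^2$ with the sum over $j$, and using that $E_{k,s,t}(c,c')$ factors as a product of a term depending on $c$ (through $s$) and a term depending on $c'$ (through $t$), one obtains
\begin{equation*}
M_{s,t}=\frac{1}{h_k}\sum_{j=1}^{h_k}\Big(\sum_{c\in C}e_{k,s,j}(c)\Big)\Big(\sum_{c'\in C}e_{k,t,j}(c')\Big).
\end{equation*}

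Next I would introduce, for each $j$ with $1\le j\le h_k$, the column vector $w_j\in\R^{n-2k+1}$ whose $s$-th coordinate ($k\le s\le n-k$) is the scalar $v_{s}^{(j)}:=\sum_{c\in C}e_{k,s,j}(c)$. Because both inner sums in the display above run over the same set $C$, the display reads exactly $M_{s,t}=\frac{1}{h_k}\sum_{j} v_s^{(j)}v_t^{(j)}$, that is, in matrix form
\begin{equation*}
M=\sum_{(c,c')\in C^2}E_k(c,c')=\frac{1}{h_k}\sum_{j=1}^{h_k} w_j\, w_j^{\,T}.
\end{equation*}

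Finally, each outer product $w_j\,w_j^{\,T}$ is positive semidefinite, and a nonnegative combination of positive semidefinite matrices (here with the positive weight $1/h_k$) is again positive semidefinite; hence $M\succeq 0$, as claimed. The only point requiring care is the index bookkeeping in the first step, namely verifying that the factorization of $E_{k,s,t}$ genuinely decouples the two arguments so that the double sum over $C^2$ splits into a product of two identical column sums over $C$. There is no deeper obstacle: the matrices $E_k$ were constructed precisely so as to package the $S_n$-isotypic projectors in this Gram form, and the positive semidefiniteness asserted by the theorem is the built-in consequence of that construction.
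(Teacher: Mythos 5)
Your proof is correct and is precisely the argument the paper has in mind when it asserts the positivity ``by the construction'' (following \cite{BV1}): the defining factorization $E_{k,s,t}(x,y)=\frac{1}{h_k}\sum_{j} e_{k,s,j}(x)e_{k,t,j}(y)$ turns the sum over $C^2$ into $\frac{1}{h_k}\sum_j w_j w_j^{T}$ with $(w_j)_s=\sum_{c\in C}e_{k,s,j}(c)$, a nonnegative combination of rank-one outer products. You have simply written out the details that the paper leaves implicit, so there is nothing to correct.
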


These constraints are not interesting for pairs because they are not
stronger than the positivity properties from Delsarte method. They are
only interesting if triples of points are involved: namely we associate
to
$(x,y,z)\in H_n^3$ the matrices 

\begin{equation*}
F_k(x,y,z):=E_k(x-z,y-z).
\end{equation*}

We have for all
$C\subset H_n$, and for all $z\in H_n$,
\begin{equation*}
\sum_{(c,c')\in C^2} F_k(c,c',z)\succeq 0
\end{equation*}
which leads to the two positive semidefinite conditions:

\begin{equation}\label{eq1}
\left\{
\begin{array}{ll}
&\sum_{(c,c',c")\in C^3} F_k(c,c',c")\succeq 0\\
&\sum_{(c,c')\in C^2, c"\notin  C} F_k(c,c',c")\succeq 0
\end{array}
\right.
\end{equation}

From Proposition \ref{p1}, $E_k(x-z,y-z)$ only depends on the values
of $wt(x-z)$, $wt(y-z)$, $wt(x-y)$; so with 
$a:=d(y,z)$, $b:=d(x,z)$,
$c:=d(x,y)$, we have for some matrices $T_k(a,b,c)$,

\begin{equation*}
F_k(x,y,z)=T_k(a,b,c).
\end{equation*}

\medskip
We introduce the unknowns $x_{a,b,c}$ of the SDP.
Let, for
\begin{equation*}
(a,b,c)\in \Omega:=\left. \{(a,b,c) \in [0\dots n]^3 : 
\begin{array}{ll}
& a+b+c\equiv 0\mod 2\\
&a+b+c\leq 2n\\
&c\leq a+b\\
&b\leq a+c\\
&a\leq b+c
\end{array}\right\}
\end{equation*}

\begin{equation*}
x_{a,b,c}:=\frac{1}{|C|} \card\{(x,y,z)\in C^3:
d(y,z)=a, d(x,z)=b, d(x,y)=c\}.
\end{equation*}

Note that 
\begin{equation*}
x_{0,c,c}=\frac{1}{|C|} \card\{(x,y)\in C^3:
d(x,y)=c\}.
\end{equation*}

With the definition
\begin{equation*}
\begin{array}{ll}
t(a,b,c)&:=\card\{z \in H_n : d(x,z)=b \text{ and }d(y,z)=a\}\text{ for
} d(x,y)=c\\
&=\binom{c}{i}\binom{n-c}{a-i}
\text{ where } a-b+c=2i 
\end{array}
\end{equation*}
the following inequalities hold for $x_{a,b,c}$ :

\begin{enumerate}
\item $x_{0,0,0}=1$
\item $x_{a,b,c}\geq 0$
\item $x_{a,b,c}=x_{\tau(a),\tau(b),\tau(c)}$ for every permutation
  $\tau$ of $\{a,b,c\}$
\item $x_{a,b,c}\leq t(a,b,c)x_{0,c,c}$
\item $x_{a,b,c}\leq t(b,c,a)x_{0,a,a}$
\item $x_{a,b,c}\leq t(c,a,b)x_{0,b,b}$
\item $\sum_{a,b,c}  T_k(a,b,c)x_{a,b,c}\succeq 0$ for all $0\leq
  k\leq \lfloor \frac{n}{2}\rfloor$
\item $\sum_{a,b,c} T_k(a,b,c)(t(a,b,c)x_{0,c,c}- x_{a,b,c}) \succeq
  0$
for all $0\leq
  k\leq \lfloor \frac{n}{2}\rfloor$
\end{enumerate}
where conditions (vii) and (viii) are equivalent to \eqref{eq1}.
To the above semidefinite constraints we add the extra condition (ix) that
translates the assumption that $d_2(C)\geq d$ for some given value $d$
(respectively $r_2(C)\geq r$, $\rr_2(C)\geq \rr$),
namely
\begin{enumerate}
\item[(ix)$^d$] $x_{a,b,c}=0$\text{ if } $abc\neq 0$ and $a+b+c\leq 2(d-1)$
\end{enumerate}
respectively 
\begin{enumerate}
\item[(ix)$^{\rr}$] $x_{a,b,c}=0$\text{ if } $abc\neq 0$ and $a+b+c < 6\rr$
\end{enumerate}
or
\begin{enumerate}
\item[(ix)$^r$] $x_{a,b,c}=0$\text{ if } $abc\neq 0$ and $\max( \lceil
  \frac{a}{2} \rceil,
\lceil \frac{b}{2} \rceil, \lceil \frac{c}{2} \rceil)\leq r-1$.
\end{enumerate}

\smallskip

\noindent
It remains to notice that
\begin{enumerate}
\item[(x)] $|C|=\sum_{c} x_{0,c,c}$.
\end{enumerate}

Thus an upper bound on $|C|$ is obtained with the optimal value
of the program that maximizes $\sum_{c} x_{0,c,c}$ under the
constraints
(i) to (ix). 

\medskip
It is worth noticing that the conditions (ix) can be replaced
by any other conditions of the type
\begin{enumerate}
\item[(ix$^*$)] $x_{a,b,c}=0$\text{ if } $(a,b,c)\in I$
\end{enumerate}
where $I$ is a set of forbidden values in $C$ related to some other
situation.
In the classical case treated in \cite{S}, $d_1(C)\geq \delta$, 
$I=\{(a,b,c) : a \text{ or }b\text{ or }c\in [1\dots
  (\delta-1)]^3\}$.

\section{Numerical results}\label{tables}

In this section we compare the SDP bounds obtained for $A_2(n,d,m)$
and for $A_2(n,r,m)$
 with the previously known bounds, stated in Section
\ref{bounds}. We recall the obvious values $A_2(n,d,3)= 2^{n-1}$,
$A_2(n,d,n)=4$,
$A_2(n,r,1)=2^n$, $A_2(n,r,\lfloor n/2 \rfloor)=4$.

Table~\ref{tab:d} gives two upper bounds for $A_2(n,d,m)$: one is the
tightest of the combinatorial bounds of section~\ref{bounds}, with a
superscript $1,2,3,4$ denoting which of the four methods, Singleton,
Hamming, Plotkin, Elias (respectively) achieves this best, and the
other is the bound obtained by the SDP method of Section~\ref{SDP}.
As we can see, in the non-trivial cases
the SDP bound gives a substantial improvement almost all the time.

For the radius
$r$, we can restrict ourselves to codes in which the pairwise
distances are even. Let us denote $A_2^+(n,r,m)$ the maximal number of
elements of such a code with minimal radius at least equal to $m$;
then one easily sees that $A_2(n,r,m)=A_2^+(n+1,r,m)$, with the
standard extension of an optimal code to an even code with an extra
coordinate. Table~\ref{tab:r} compares the best bound for $A_2^+(n,r,m)$
(in italics) given by the combinatorial methods of Section~\ref{bounds}
to the SDP bound. Again we have improvements in almost every instance.

 
{\footnotesize
\begin{table}
\begin{tabular}{r|l|l|l|l|l|l|l|l|l|l|l|l|l|l|l|l}
\hspace{-2mm}$n$ \hspace{-1mm}\raisebox{1mm}{$\backslash m$}\hspace{-1mm}
&4&5&6&7&\hspace{-1.5mm} 8&\hspace{-1.5mm} 9&\hspace{-1.5mm} 10&\hspace{-1.5mm}11&\hspace{-1.5mm} 12&\hspace{-1.5mm} 13&\hspace{-1.5mm} 14&\hspace{-1.5mm} 15&\hspace{-1mm}16&\hspace{-1.5mm} 17
\!\!&\hspace{-1.5mm} 18 \!\! &\hspace{-1.5mm}19 \hspace{-2mm}\\
\hline
10 &\hspace{-1mm} 170 &85 &42 &24 &\hspace{-1.5mm} 12 &\hspace{-1.5mm} 6 &&&  &&&& &&&\\
   &\hspace{-1mm}  ${\it 186^2}$& $\it 128^1$& ${\mathit 64^1}$ & 
 $\it 32^1$ &\hspace{-1.5mm} $\it 16^1$&\hspace{-1.5mm} $\it 6^3$
&&&  &&&& &&&
 \\
\hline
11 &\hspace{-1mm}  290&170&85&35&\hspace{-1.5mm} 24&\hspace{-1.5mm} 12&\hspace{-1.5mm} 5&&  &&&& &&& \\
   &\hspace{-1mm} $\it 341^2$ \!\! &$\it  256^1$ \!\! &$\it 128^1$
   \!\! &$\it 61^2$ \!\! &\hspace{-1.5mm} $\it 32^1$ \!\! &\hspace{-1.5mm} $\it
12^3$ &\hspace{-1.5mm} $\it 5^3$
&&&& &&&& &\\
\hline
12 &\hspace{-1mm}  554&277&170&68&\hspace{-1.5mm} 33 &\hspace{-1.5mm} 24&\hspace{-1.5mm} 8&\hspace{-1.5mm} 5&  &&&& &&& \\
   &\hspace{-1mm} $\it 630^2$ \!\! &$\it 512^1$ \!\! &$\it 256^1$ \!\!
   &$\it  103^2$ \!\! &\hspace{-1.5mm} $\it  64^1$ \!\! &\hspace{-1.5mm} $\it 32^1$ 
\!\! &\hspace{-1.5mm} $\it 10^3$ \!\! &\hspace{-1.5mm} $\it 5^3$
&&&& &&&&\\
\hline
13 &\hspace{-1mm} 1042 &521&266&130&\hspace{-1.5mm} 64 &\hspace{-1.5mm} 32&\hspace{-1.5mm} 16&\hspace{-1.5mm} 8&\hspace{-1.5mm} 5 &&&& &&& \\
   &\hspace{-1mm} $\it 1170^2$\! \!\! &$\it 1024^1$\!\! \!\! &$\it
   512^1$\!\! \!\! &$\it  178^2$ \!\! &\hspace{-1.5mm} $\it 128^1$ 
\!\! &\hspace{-1.5mm} $\it 64^1$ \!\! &\hspace{-1.5mm} $\it 32^1$ \!\! &\hspace{-1.5mm} $\it 8^3$ \!\! &\hspace{-1.5mm}$\it 5^3$ &&&& &&&\\
\hline
14 &\hspace{-1mm}  2048&1024&512&257&\hspace{-1.5mm} 128 &\hspace{-1.5mm} 64&\hspace{-1.5mm} 32&\hspace{-1.5mm} 16&\hspace{-1.5mm} 8 &\hspace{-1.5mm}
5&&& &&& \\
   &\hspace{-1mm} $\it 2184^2$\! \!\! &$\it 2048^1$\!\! \!\! &$\it
   1024^1$\!\! \!\! &$\it 309^2$ \!\! &\hspace{-1.5mm} $\it 256^1$ 
\!\! &\hspace{-1.5mm} $\it 128^1$ \!\! &\hspace{-1.5mm} $\it 64^1$ \!\! &\hspace{-1.5mm} $\it 22^3$ \!\!
&\hspace{-1.5mm} $\it 8^3$
\!\! &\hspace{-1.5mm} $\it 5^3$ &&&& &&\\
\hline
15 &\hspace{-1mm} 3616 &2048&1024&414&\hspace{-1.5mm} 256 &\hspace{-1.5mm} 128&\hspace{-1.5mm} 43&\hspace{-1.5mm} 32&\hspace{-1.5mm}
16&\hspace{-1.5mm} 6&\hspace{-1.5mm} 5&& &&& \\
   &\hspace{-1mm} $\it 4096^2$\! \!\! &$\it 4096^1$\!\! \!\! &$\it
   2048^1$\!\! \!\! &$\it 541^2$ \!\! &\hspace{-1.5mm} $\it 512^1$ 
\!\! &\hspace{-1.5mm} $\it 256^1$ \!\! &\hspace{-1.5mm} $\it 113^2$ \!\! &\hspace{-1.5mm} $\it 64^1$ \!\!
&\hspace{-1.5mm} $\it 16^3$
\!\! &\hspace{-1.5mm} $\it 7^3$ 
\!\! &\hspace{-1.5mm} $\it 5^3$&&&&&\\
\hline
16 &\hspace{-1mm} 6963 &3489&2048&766&\hspace{-1.5mm} 382 &\hspace{-1.5mm} 256&\hspace{-1.5mm} 83&\hspace{-1.5mm} 41&\hspace{-1.5mm} 32
&\hspace{-1.5mm} 10&\hspace{-1.5mm} 6&\hspace{-1.5mm} 5& &&& \\
   &\hspace{-1mm} $\it 7710^2$\!\! \!\! &$\it 7710^2$\!\! \!\! &$\it
   4096^1$\!\! \!\! &$\it 956^2$ \!\! &\hspace{-1.5mm} $\it 956^2$ 
\!\! &\hspace{-1.5mm} $\it 512^1$ \!\! &\hspace{-1.5mm} $\it 188^2$ \!\! &\hspace{-1.5mm} $\it 128^1$ \!\!
&\hspace{-1.5mm} $\it 64^1$
\!\! &\hspace{-1.5mm} $\it 13^3$ 
\!\! &\hspace{-1.5mm} $\it 7^3$ \!\! &\hspace{-1.5mm} $\it 5^3$ &&&&\\
\hline
17 &\hspace{-1mm} 13296 &6696&3407&1395&\hspace{-1.5mm} 708&\hspace{-1.5mm} 359&\hspace{-1.5mm} 151&\hspace{-1.5mm} 80&\hspace{-1.5mm} 41
&\hspace{-1.5mm} 20&\hspace{-1.5mm} 10&\hspace{-1.5mm} 6&\hspace{-2mm} 4&&& \\
   &\hspace{-1mm} $\it 14563^2$\!\! \!\! &$\it 14563^2$\!\! \!\! &$\it
   7710^4$\!\! \!\! &$\it 1702^2$\hspace{-1.5mm} &\hspace{-1.5mm} $\it
   1702^2$\hspace{-1.5mm} &\hspace{-1.5mm} $\it 963^4$ \!\! &\hspace{-1.5mm} $\it 314^2$ \!\!
   &\hspace{-1.5mm} $\it
   256^1$ \!\! &\hspace{-1.5mm} $\it 128^1$ \!\! &\hspace{-1.5mm} $\it 52^3$ \!\! &\hspace{-1.5mm} $\it 11^3$ 
\!\! &\hspace{-1.5mm} $\it 6^3$ \!\! &\hspace{-2mm} $\it 4^3$&&&\\
\hline
18 &\hspace{-1mm} 26214  &13107&6555&2559&\hspace{-1.5mm}
1313&\hspace{-1.5mm} 682&\hspace{-1.5mm} 288&\hspace{-1.5mm} 142&\hspace{-1.5mm} 80
&\hspace{-1.5mm} 40&\hspace{-1.5mm} 20&\hspace{-1.5mm} 10&\hspace{-2mm} 6&\hspace{-1mm} 4&& \\
   &\hspace{-1mm} $\it 27594^2$\!\! \!\! &$\it 27594^2$\!\! \!\! &$\it
   15420^4$\!\! \!\! &$\it 3048^2$\hspace{-1.5mm} &\hspace{-1.5mm} $\it
   3048^2$\hspace{-1.5mm} &\hspace{-1.5mm} $\it 1927^4$\hspace{-2mm} &\hspace{-1.5mm} $\it 530^2$ \!\!
   &\hspace{-1.5mm} $\it 512^1$ \!\! &\hspace{-1.5mm} $\it 256^1$ \!\! &\hspace{-1.5mm} $\it 128^1$\hspace{-2mm}
   &\hspace{-1.5mm} $\it 28^3$ 
\!\! &\hspace{-1.5mm} $\it 10^3$ \!\! &\hspace{-2mm} $\it 6^3$ \!\! &\hspace{-1mm} $\it 4^3$\!\!\! &&\\
\hline
19 &\hspace{-1mm} 47337 &26214&13107&4531&\hspace{-1.5mm}
2431&\hspace{-1.5mm} 1284&\hspace{-1.5mm} 513&\hspace{-1.5mm} 276&\hspace{-1.5mm} 142
&\hspace{-1.5mm} 51&\hspace{-1.5mm} 40&\hspace{-1.5mm} 20&\hspace{-2mm} 8&\hspace{-1mm} 6&\hspace{-1mm} 4& \\
   &\hspace{-1mm} $\it 52428^2$\!\! \!\! &$\it 52428^2$\!\! \!\! &$\it
   27594^4$\!\! \!\! &$\it 5489^2$\hspace{-1.5mm} &\hspace{-1.5mm} $\it
   5489^2$\hspace{-1.5mm} &\hspace{-1.5mm} $\it 3246^4$\hspace{-2mm}
   &\hspace{-1.5mm} $\it 903^2$ \!\!
   &\hspace{-1.5mm} $\it 903^2$ \!\! &\hspace{-1.5mm} $\it 512^1$\!\! &\hspace{-1.5mm} $\it 208^2$\hspace{-2mm} &\hspace{-1.5mm}
   $\it 128^1$\hspace{-2mm} &\hspace{-1.5mm} $\it 20^3$ \!\! &\hspace{-2mm} $\it 9^3$ \!\!
&\hspace{-1mm} $\it
6^3$\!\!\! &\hspace{-1mm} $\it 4^3$ \hspace{-2mm} &\\
\hline
20 &\hspace{-1mm} 91750 &46113&26214&8133&\hspace{-1.5mm} 4342&\hspace{-1.5mm} 2373&\hspace{-1.5mm}
1024&\hspace{-1.5mm} 512&\hspace{-1.5mm} 274
&\hspace{-1.5mm} 94&\hspace{-1.5mm} 50&\hspace{-1.5mm} 40&
\hspace{-2mm} 12&\hspace{-1mm} 8&\hspace{-1mm} 6&\hspace{-1.5mm} 4 \\
   &\hspace{-1mm} $\it 99864^2$\!\! \!\! &$\it 99864^2$\!\!\!\! &$\it
55188^4$\!\! \!\! &$\it 9939^2$\hspace{-1.5mm} &\hspace{-1.5mm} $\it
9939^2$\hspace{-1.5mm} &\hspace{-1.5mm} $\it 5518^4$\hspace{-2mm} &\hspace{-1.5mm} $\it 1552^2$\!\!\!
&\hspace{-1.5mm} $\it 1514^4$\hspace{-1.5mm}
&\hspace{-1.5mm} ${\it 1024^1}$\hspace{-1.5mm} &\hspace{-1.5mm} $\it 338^2$\hspace{-2mm} &\hspace{-1.5mm} $\it
256^1$\hspace{-2mm} &\hspace{-1.5mm} $\it 128^1$\!\!\! &\hspace{-2mm} $\it 16^3$\!\!\! 
&\hspace{-1mm} $\it 8^3$\!\!\!&\hspace{-1mm} $\it 6^3$ \hspace{-2mm}
&\hspace{-1.5mm} $\it 4^3$ \hspace{-3mm}\\
\hline
\end{tabular}
\\[0.3cm]
\caption{Bounds on $A_2(n,d, m)$.}
\label{tab:d}
\end{table}

}

{\small
\begin{table}
\begin{tabular}{r|c|c|c|c|c|c|c}
&m=2&3&4&5&6&7&8\\
\hline
n=10&96&16&&&&&\\
&\it  102 &\it 22   &&&&&\\
\hline
11&174&26&5&&&&\\
 &\it 186&\it  36&\it  11   &&&&\\
\hline
12&341&48&10&&&&\\
&\it  341&\it  61&\it  17   &&&&\\
\hline
13&582&89&14&5&&&\\
 &\it 630&\it  103&\it  27&\it  10&&&\\   
\hline
14&1109&161&22&5&&&\\
&\it  1170&\it  178&\it  43&\it  14   &&&\\
\hline
15&2085&283&36&9&5&&\\
&\it  2184&\it  309&\it  69&\it  22&\it  9 &&\\  
\hline
16&4096&526&64&13&5&&\\
&\it  4096&\it  541&\it  113&\it  33&\it  13   &&\\
\hline
17&7235&848&123&18&5&4&\\
&\it  7710&\it  956&\it  188&\it  52&\it  19&\it  8&\\   
\hline
18&13926&1550&216&30&10&5&\\
&\it  14563&\it  1702&\it  314&\it  81&\it  27&\it  12&\\   
\hline
19&21883&2852&379&48&12&5&4\\
&\it  27594&\it  3048&\it  530&\it  129&\it  41&\it  16&\it  8   \\
\hline
\end{tabular}\\[0.3cm]
\caption{Bounds on $A_2^+(n,r, m)$.}
\label{tab:r}
\end{table}
}

\end{document}